\newcommand{\eps}{\varepsilon}
\newcounter {lemcounter}
\newtheorem{theorem}{Theorem}
\newtheorem{lemma}[lemcounter]{Lemma}
\newtheorem{observation}{Observation}
\newtheorem{cor}{Corollary}
\newcommand {\quality}[2]{Q^#2_{#1_1#1_2}} 
\newcommand {\Tquality}[2]{Q^#2(#1_1,#1_2)} 
\newcommand {\sPoint}{\mathfrak{s}}
\newcommand {\tPoint}{\mathfrak{t}}
\newcommand {\ShortMP}{{\sf wShortMP}}
\newcommand {\LongMP}{{\sf wLongMP}}
\newcommand {\MinEx}{{\sf MinEx}}
\newcommand {\MaxIn}{{\sf MaxIn}}
\newcommand {\acmq}{{\sf{ACM}}$\mathbb{Q}$\xspace}
\newcommand*\samethanks[1][\value{footnote}]{\footnotemark[#1]}
\begin{document}
\title{Similarity of Polygonal Curves in the Presence of Outliers}
\author[1]{Jean-Lou De Carufel\thanks{Research supported by Fonds de Recherche du Qu\'{e}bec - Nature et Technologies}}
\author[1]{Amin Gheibi\thanks{Research supported by High Performance Computing Virtual Laboratory and SUN Microsystems of Canada}\thanks{Research supported by Natural Sciences and Engineering Research Council of Canada}}
\author[1]{Anil Maheshwari\samethanks}
\author[1]{J\"{o}rg-R\"{u}diger Sack\samethanks[2]\samethanks[3]}
\author[2]{Christian Scheffer}
\affil[1]{School of Computer Science, Carleton University, Ottawa, ON, Canada
  \texttt{\small jdecaruf@cg.scs.carleton.ca\\agheibi@scs.carleton.ca\\
anil@scs.carleton.ca\\sack@scs.carleton.ca}}

\affil[2]{Department of Computer Science, Westf\"{a}lische Wilhelms-Universit\"{a}t M\"{u}nster, Germany
  \texttt{\small Christian.Scheffer@uni-muenster.de}}

\maketitle
\thispagestyle{empty}
\abstract{
The Fr\'{e}chet distance is a well studied and commonly used measure to capture the similarity of polygonal curves. Unfortunately, it exhibits a high sensitivity to the presence of outliers. Since the presence of outliers is a frequently occurring phenomenon  in practice, a robust variant of Fr\'{e}chet distance is required which absorbs outliers. We study such a variant here. In this modified variant, our objective is to minimize the length of subcurves of two polygonal curves that need to be ignored (\MinEx\ problem), or alternately, maximize the length of subcurves that are preserved (\MaxIn\ problem), to achieve a given  Fr\'{e}chet distance.
An exact solution to one problem would imply an exact solution to the other problem. However, we show that these problems 
are not solvable by radicals over $\mathbb{Q}$ and that the degree of the polynomial equations involved is unbounded in general. 
This motivates the search for approximate solutions. 
We present an algorithm, which approximates, for a given input parameter $\delta$, optimal solutions for the \MinEx\ and \MaxIn\ problems up to an additive approximation error $\delta$ times the length of the input curves. The resulting running time is upper bounded by $\mathcal{O} \left( \frac{n^3}{\delta} \log \left( \frac{n}{\delta} \right)\right)$, where $n$ is the complexity of the input polygonal curves.\\ 
{\bf Keywords:} Fr\'{e}chet distance, similarity of polygonal curves, approximation, weighted shortest path.

  
  
\eject
\setcounter{page}{1}
\section{Introduction}\label{sec:introduction}
Measuring similarity between two polygonal curves in Euclidean space is  a well studied problem in computational geometry, both in practical and theoretical setting. It is of practical relevance in areas such as pattern analysis,  
shape matching and clustering. It is of theoretical interest as well since the problems in this domain are 
fairly challenging and lead to innovative tools and techniques.
The Fr\'{e}chet distance - one of the widely used measures for similarity between curves - is intuitive and takes into account global features of the curves instead of local ones, such as their vertices \cite{Alt95,Buchin11,Driemel12}.
Despite being a high quality similarity measure for polygonal curves, it is very sensitive to the presence of outliers. 
Consequently, researches have been carried out to formalize the notion of similarity among a set of polygonal curves that tolerate outliers. 
They are based on intersection of curves in local neighborhood \cite{Kreveld11}, topological features \cite{Buchin10}, or  adding flexibility to incorporate the existence of outliers \cite{Driemel12}. In \cite{Driemel12}, Driemel and Har-Peled discuss a new notion of robust Fr\'{e}chet distance, where they allow $k$ shortcuts between vertices of one of the two curves, where $k$ is a constant specified as an input parameter. 
They provide a constant factor approximation algorithm for finding the minimum Fr\'{e}chet distance among all possible $k$-shortcuts. 
One drawback of their approach is that a shortcut is selected without considering the length of the ignored part. Consequently, such shortcuts may remove a significant portion of a curve. As a result, substantial information about the similarity of the original curves could be ignored. A second drawback of their approach is that the shortcuts are only allowed to one of the curves. Since noise could be present in both curves, shortcuts may be required on both to achieve a good result. For example, Figure \ref{fig:stretchedP}a shows two polygonal curves that both need simultaneously shortcuts to become similar.

In this paper we discuss an alternative Fr\'{e}chet distance measure to tolerate outliers;  
it incorporates the length of the curves and allows the possibility of shortcuts on one or both curves. 
We consider two natural dual perspectives  of this problem. They are outlined as follows using the common dog-leash metaphor for Fr\'{e}chet distance.\\
\noindent {\bf Min-Exclusion (\MinEx) Problem}: If a person wants to walk on one curve and his/her dog on the other one, for a given leash length $\eps\ge 0$, we wish to determine a walk that minimizes the total length of all parts of the curves that need a leash length bigger than $\eps$.\\
\noindent {\bf Max-Inclusion (\MaxIn) Problem}: We are looking for a walk that maximizes the total length of all parts of the curves that a leash length less than or equal to $\eps$ is sufficient.

Observe that the solution for one problem leads to a solution for the other problem.
An exact solution for these problems is presented in \cite{Yusu09}, where the distances are measured using (more restrictive and much simpler) $L_1$ and $L_\infty$ metrics. 
In Section \ref{sec:unsolv}, using Galois theory,  we  show that these problems are not solvable by radicals over $\mathbb{Q}$, when 
distances are measured using $L_2$-metric. (It is natural to study Fr\'{e}chet distance problems in $L_2$-metric, see e.g. \cite{Alt95}.) This suggests that we should look for approximation algorithms.
A $(1-\delta)$-approximation algorithm for the \MaxIn\ problem had been outlined in \cite{Yusu06}, where $\delta$ is the approximation factor. As we show in Section \ref{sec:FPTAS} this analysis is incorrect (see also \cite{Yusu13}). Therefore, to the best of our knowledge, no FPTAS (Fully Polynomial-Time Approximation Scheme) exists for this problem.
In this paper,  we provide algorithms that approximate solutions for the \MinEx\ and \MaxIn\ problems up to an additive approximation error $\delta$ times the length of the input curves.

 \begin{figure}[htb]
    		\begin{center}
     			 \begin{tabular}{cc}
       				\includegraphics[height=3.63cm]{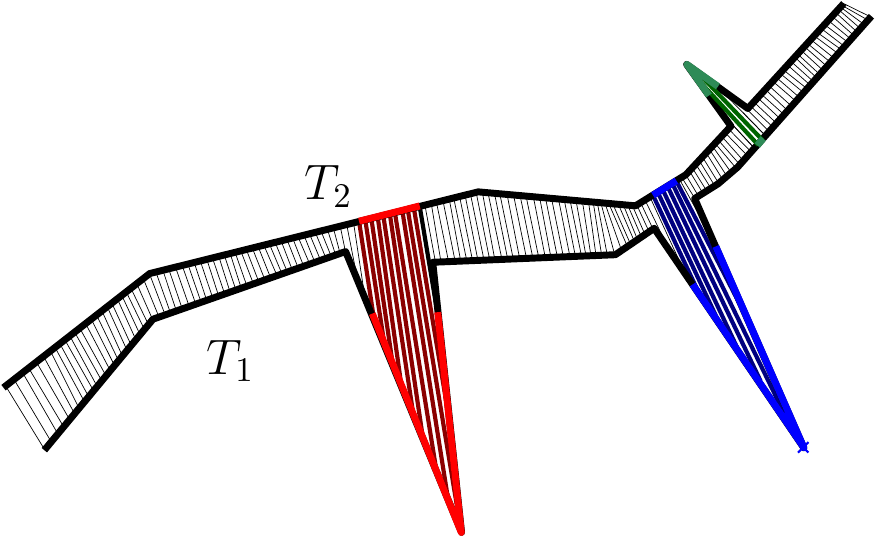} & \includegraphics[height=3.63cm]{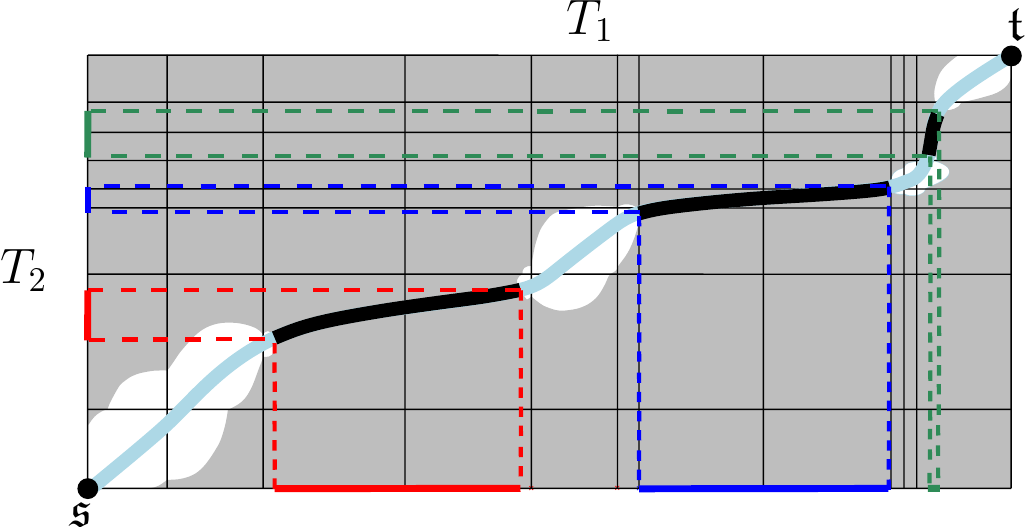}\\
       				{\small a) Two polygonal curves $T_1$ and $T_2$.} & {\small b) Free-space diagram for $T_1$ and $T_2$.}\\
      			\end{tabular}
    		\end{center}
    		\vspace*{-12pt}
    		\caption{a) A possible solution is illustrated by the connecting lines between the parameterizations for $T_1$ and $T_2$. The subcurves on both polygonal curves that should be ignored are illustrated by the blue, red and green subcurves on $T_1$ and $T_2$. So, $\Tquality{T}{B}$ is the summation of the lengths of the colored subcurves and $\Tquality{T}{W}$ is that of the black subcurves. b) The solution corresponds to an $xy$-monotone path in the deformed free-space diagram $F$. In this space, $\Tquality{T}{B}$ can be measured by  summing the lengths of its subpaths going through the forbidden space (shaded gray area), measured in the $L_1$-metric (similarly for $\Tquality{T}{W}$).}
  		\label{fig:stretchedP}
  	\end{figure}
\subsection{Preliminaries}\label{subsec:Preliminaries}
	Let $T_1:\left[ 0,1 \right] \rightarrow \mathbb{R}^2$ and $T_2: \left[ 0,1 \right] \rightarrow \mathbb{R}^2$ be two polygonal curves.  Their \emph{Fr\'{e}chet distance} is defined as the minimum \emph{leash length} required to walk only forwardly, in parallel on both $T_1$ and $T_2$, from the starting points to the ending points, at which the two walks could have different variating speeds. More formally, two monotone parameterizations $\alpha_1, \alpha_2: \left[ 0,1 \right] \rightarrow \left[0,1\right]$ define, for each time $t \in \left[ 0,1 \right]$, a matching $\left( T_1 \left( \alpha_1 \left( t \right) \right), T_2 \left( \alpha_2 \left( t \right) \right) \right)$ of one point on $T_1$ to exactly one point on $T_2$ and vice-versa. The needed leash length for the two parameterizations is defined as the maximum Euclidean distance of two matched points, over all times. Then, Fr\'{e}chet distance $\delta_F \left( T_1,T_2 \right)$ is defined as the infimum of the required leash lengths over all possible pairs of monotone parameterizations \cite{Alt95}: 
		\begin{equation}\label{eqn-parameter}
			\delta_F \left( T_1,T_2 \right) := \inf_{\alpha_1,\alpha_2: \left[ 0,1\right] \rightarrow \left[ 0,1\right]}  \max_{t \in \left[ 0,1 \right]} \left\lbrace |T_1 \left( \alpha_1 \left( t \right) \right) T_2 \left( \alpha_2 \left( t \right) \right)| \right\rbrace,
		\end{equation}
where $|.|$ denotes the Euclidean distance. For simplification, we say from now, that all considered parameterizations are monotone.
The corresponding \emph{Fr\'{e}chet distance decision problem} asks if there exist two parameterizations for a given leash length $\varepsilon$, realizing a Fr\'{e}chet distance between $T_1$ and $T_2$ that is upper bounded by $\varepsilon$.
In other words, it asks if it is possible to walk your dog with a given leash of length $\varepsilon$, such that you and your dog stay on your own curves.
For a fixed leash length $\varepsilon$, a pair of arbitrary points on the curves of $T_1$ and $T_2$ is called \emph{forbidden}, if their Euclidean distance is bigger than $\varepsilon$ and otherwise, it is called \emph{free}. We refer to a pair of parameterizations $\left(\alpha_1, \alpha_2\right)$ for $T_1$ and $T_2$ w.r.t. $\varepsilon$, as a possible \emph{solution} for $T_1$ and $T_2$. Analogously to the matching of points on the curves of $T_1$ and $T_2$, we define a pair of parameters $p_1$ and $p_2$ as \emph{forbidden} (\emph{free}), if the Euclidean distance between $T_1 \left( p_1 \right)$ and $T_2 \left( p_2 \right)$ is greater (equal to or less) than $\varepsilon$. In this context, the Fr\'{e}chet distance decision problem asks, if there exists a parametrization, such that all corresponding matchings are free \cite{Alt95}.

To decide, whether the Fr\'{e}chet distance between two polygonal curves is upper bounded by a given $\varepsilon$, the \emph{free-space diagram} is computed. 
The free-space diagram is a decomposition of  the parameter space $\left[0,1 \right] \times \left[0,1 \right]$ of $T_1$ and $T_2$ into two sets.
The first one is the \emph{forbidden-space}, which is defined as the union of all forbidden parameter pairs. 
The second set is the \emph{free-space}, defined as the complement of the forbidden-space. Let $n_1$ (respectively, $n_2$) be the number of segments of $T_1$ (respectively, $T_2$) and let $n=n_1+n_2$.
Since we consider the worst case running time, we assume, w.l.o.g., $n = n_1 = n_2$. The free-space diagram is a rectangle, partitioned into $n$ columns and $n$ rows. It consists of $n^2$ \emph{parameter cells} $C^{i,j}$, for $i,j = 1,...,n$, whose interiors do not intersect with each other. 
For each parameter cell $C^{i,j}$, there exists an ellipse such that the intersection of the area bounded by this ellipse with $C^{i,j}$ is equal to the free-space region of that cell \cite{Alt95}.  
Free-space of $C^{i,j}$ is denoted by $C_W^{i,j}$ and its forbidden space by $C_B^{i,j}$. We consider the free-space diagram in the Cartesian plane and assume that it lies axes aligned. We say that a point lies to the left (right) of another point, if its $x$-coordinate is smaller (greater) than the second one. Analogously, we say, that a point lies below (above) another point, if its $y$-coordinate is smaller (greater) than the second one. We define a point $s$ {\em dominates} another point $s'$, if and only if $s'$ does not lie to the right or above $s$.

For simplification, we say that paths, curves, edges, etc. are $xy$-monotone if they are nondecreasing in both $x$- and $y$-coordinates. We know that any pair of parameterizations $(\alpha_1,\alpha_2)$  corresponds to an $xy$-monotone path in free-space diagram, $\pi_{\sPoint \tPoint}$, connecting the bottom left corner of the diagram, $\sPoint$, to the upper right corner, $\tPoint$. So, deciding if the Fr\'{e}chet distance of $T_1$ and $T_2$ is upper bounded by $\varepsilon$, is equivalent to the decision: Does there exists an $xy$-monotone path in the free-space diagram for $\varepsilon$, connecting $\sPoint$ to $\tPoint$ and avoiding the forbidden space \cite{Alt95}? 
\subsection{Problem Definition}\label{subsec:ProblemDefinition}
Let $T_1$ and $T_2$ be two polygonal curves, each consisting of at most $n$ line segments; and $\varepsilon\ge 0$ be a constant.
%
For the \MinEx\ (respectively, \MaxIn) problem, the {\em quality} of a solution is the sum of the lengths of the subcurves of $T_1$ and $T_2$ that lie in the  forbidden (respectively, free) space  (see Figure \ref{fig:stretchedP} for an example).
Formally, for a given pair of parameterizations $\left( \alpha_1,\alpha_2 \right)$, let $\mathcal{B}_{\alpha_1\alpha_2} \subseteq \left[ 0,1 \right]$ be the closure of the set of times such that the corresponding parameter pairs are forbidden, and $\mathcal{W}_{\alpha_1\alpha_2} \subseteq \left[ 0,1 \right]$ be the closure of the set of times such that the corresponding parameter pairs are free. We define the quality of a solution $\left( \alpha_1, \alpha_2 \right)$ for the \MinEx\ problem $\quality{\alpha}{B}$ and for the \MaxIn\ problem $\quality{\alpha}{W}$ as follows (see Figure \ref{fig:stretchedP}a):

\begin{equation}
	\label{eq:quality}
		\begin{array}{lcr}
			\quality{\alpha}{B}& :=& \displaystyle\int_{t \in \mathcal{B}_{\alpha_1\alpha_2}} ||T_1 \left( \alpha_1 \left( t \right) \right)'|| dt + \int_{t \in \mathcal{B}_{\alpha_1\alpha_2}} ||T_2 \left( \alpha_2 \left( t \right) \right)'|| dt \\
			\quality{\alpha}{W} &:= &\displaystyle\int_{t \in \mathcal{W}_{\alpha_1\alpha_2}} ||T_1 \left( \alpha_1 \left( t \right) \right)'|| dt + \int_{t \in \mathcal{W}_{\alpha_1\alpha_2}} ||T_2 \left( \alpha_2 \left( t \right) \right)'|| dt
		\end{array}
	\end{equation}
where $||v||$ is $L_2$ norm of a vector $v$. We call $\left( \alpha_1, \alpha_2 \right)$ \emph{optimal} if it minimizes (respectively, maximizes) $\quality{\alpha}{B}$ (respectively, $\quality{\alpha}{W}$) and define the \emph{quality of $T_1$ and $T_2$ w.r.t. $\varepsilon$} as its value, i.e.,
	\begin{equation}
		\begin{array}{lcr}
			\Tquality{T}{B}&:=& \displaystyle\inf_{\alpha_1,\alpha_2:\left[ 0,1 \right] \rightarrow \left[ 0,1 \right]} \quality{\alpha}{B}\\  \Tquality{T}{W}&:=& \displaystyle\sup_{\alpha_1,\alpha_2:\left[ 0,1 \right] \rightarrow \left[ 0,1 \right]} \quality{\alpha}{W}
		\end{array}
	\label{eq:quality2}
	\end{equation}
	
This means that the quality of $T_1$ and $T_2$ is the minimum (respectively, maximum) sum of lengths of curves on $T_1$ and $T_2$ to be ignored (respectively, matched), to obtain a Fr\'{e}chet distance not greater than $\varepsilon$.
For a given $\varepsilon$, we would like to find a solution whose quality is not much worse than an optimal solution. To do so, we transform this problem setting into a weighted $xy$-monotone path problem, between $\sPoint$ and $\tPoint$, in the free-space diagram of $T_1$ and $T_2$. To measure the sum of the lengths of the subcurves of $T_1$ and $T_2$ directly in the free-space diagram, we stretch and compress the columns and rows of the diagram, such that their widths and heights are equal to the lengths of the corresponding segments. We call the resulting diagram the \emph{deformed free-space diagram} and  denote it by $F$. 
To solve the \MinEx\ (respectively, \MaxIn) problem,  we look for an $xy$-monotone path  $\pi_{\sPoint \tPoint} \subset F$
from $\sPoint$ to $\tPoint$, where our goal is to minimize (respectively, maximize) the length of $\pi_{\sPoint \tPoint}$, lying in the forbidden (respectively, free) space of $F$. Note that the length of the polygonal curves $T_1$ and $T_2$ corresponding to 
the parts of $\pi_{\sPoint \tPoint}$ lying in the forbidden or the free space equals the length of  $\pi_{\sPoint \tPoint}$ measured under $L_1$-metric in the corresponding space. We have the following observation. 

\begin{observation}\label{obs:transformation}
Let $T_1$ and $T_2$ be two arbitrary polygonal curves in $\mathbb{R}^2$ and let $F$ be the corresponding deformed free-space diagram for a leash length of $\varepsilon$. Let $\pi_{\sPoint \tPoint} \subset F$ be a path corresponding to a pair of parameterizations $\left( \alpha_1,\alpha_2 \right)$ of $T_1$ and $T_2$ w.r.t. $\varepsilon$. Then, the sum of the lengths of the forbidden (respectively, free) paths of $\pi_{\sPoint \tPoint}$, measured under $L_1$-metric, is equal to $\quality{\alpha}{B}$ (respectively, $\quality{\alpha}{W}$).
\end{observation}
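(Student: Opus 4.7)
The key idea is to identify the $x$- and $y$-coordinates in the deformed diagram $F$ with arclengths along $T_1$ and $T_2$, so that the $L_1$-length of an $xy$-monotone subpath in $F$ coincides with the sum of arclengths traversed on the two curves. First I would make precise the deformation: if $T_1$ has segments of lengths $\ell^1_1,\ldots,\ell^1_n$ and $T_2$ has segments of lengths $\ell^2_1,\ldots,\ell^2_n$, then column $i$ of $F$ has width $\ell^1_i$ and row $j$ has height $\ell^2_j$. Under this scaling, the $x$-coordinate of a point lying in column $i$ at parameter $p_1$ of $T_1$ is exactly $\sum_{k<i}\ell^1_k + \|T_1(p_1)-T_1(v^1_{i-1})\|$, which is the arclength from the start of $T_1$ up to $T_1(p_1)$; analogously for $y$.

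Next, given the pair $(\alpha_1,\alpha_2)$, its associated path in $F$ is $\pi(t)=\bigl(\mathrm{arc}_1(\alpha_1(t)),\mathrm{arc}_2(\alpha_2(t))\bigr)$, where $\mathrm{arc}_i$ denotes arclength along $T_i$. Since parameterizations are monotone, $\pi$ is $xy$-monotone. By construction of the free-space diagram, the classification of $\pi(t)$ as forbidden or free coincides with the classification of the parameter pair $(\alpha_1(t),\alpha_2(t))$ as forbidden or free, so the forbidden (resp.\ free) subpath of $\pi_{\sPoint\tPoint}$ corresponds exactly to times $t\in\mathcal{B}_{\alpha_1\alpha_2}$ (resp.\ $\mathcal{W}_{\alpha_1\alpha_2}$).

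The third step is the arclength identity. On any $xy$-monotone curve, the $L_1$-arclength element is $dx+dy$. Plugging in $x(t)=\mathrm{arc}_1(\alpha_1(t))$ and $y(t)=\mathrm{arc}_2(\alpha_2(t))$, the chain rule gives $dx = \|T_1(\alpha_1(t))'\|\,dt$ and $dy = \|T_2(\alpha_2(t))'\|\,dt$, so
\begin{equation*}
\int_{t\in\mathcal{B}_{\alpha_1\alpha_2}}(dx+dy) \;=\; \int_{t\in\mathcal{B}_{\alpha_1\alpha_2}}\|T_1(\alpha_1(t))'\|\,dt \;+\; \int_{t\in\mathcal{B}_{\alpha_1\alpha_2}}\|T_2(\alpha_2(t))'\|\,dt \;=\; \quality{\alpha}{B},
\end{equation*}
and the same argument with $\mathcal{W}$ gives the $\quality{\alpha}{W}$ identity.

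\textbf{Main obstacle.} The only delicate point is handling the boundary between free and forbidden space: the definitions use closures $\mathcal{B}_{\alpha_1\alpha_2}$ and $\mathcal{W}_{\alpha_1\alpha_2}$, so these sets overlap on the (measure-zero) boundary times. I would argue that the boundary contributes zero to both integrals because the Lebesgue measure of these crossing times is zero for a piecewise-linear/algebraic path in $F$, so no double counting affects the identity. A second minor point is reparametrization invariance: the quality in Equation~\eqref{eq:quality} does not depend on how $t$ traces the path, only on the images, so the argument above applies to any $xy$-monotone parametrization of $\pi_{\sPoint\tPoint}$.
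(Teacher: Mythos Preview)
Your argument is correct and is essentially a careful unpacking of what the paper leaves implicit: the paper does not give a formal proof of this observation but merely precedes it with the remark that, after stretching the columns and rows of the free-space diagram to match the segment lengths, the $L_1$-length of the relevant portion of $\pi_{\sPoint\tPoint}$ equals the corresponding arclength on $T_1$ and $T_2$. Your identification of the $x$- and $y$-coordinates in $F$ with arclengths along $T_1$ and $T_2$, followed by the chain-rule computation $dx+dy=\|T_1(\alpha_1(t))'\|\,dt+\|T_2(\alpha_2(t))'\|\,dt$, is exactly the content the paper takes for granted; your treatment of the measure-zero overlap of the closures $\mathcal{B}_{\alpha_1\alpha_2}$ and $\mathcal{W}_{\alpha_1\alpha_2}$ is a nice extra bit of rigor that the paper does not address.
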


Now it is easy to see that the \MinEx\ and \MaxIn\ problems are transformed to the following path problems.\\
\noindent {\bf Weighted shortest $xy$-monotone path (\ShortMP) problem}: Compute an $xy$-monotone weighted shortest path from 
 $\sPoint$ to $\tPoint$ in $F$, where the weight in the forbidden-space is one and the weight in the free-space is zero. The length of a path is defined as the sum of the lengths (measured in $L_1$-metric) of the part of the path lying in the forbidden space.\\
\noindent {\bf Weighted longest $xy$-monotone path (\LongMP) problem}: Compute an $xy$-monotone weighted longest path from 
 $\sPoint$ to $\tPoint$ in $F$, where the weight in the forbidden-space is zero and the weight in the free-space is one. The length of a path is defined as the sum of the lengths (measured in $L_1$-metric) of the part of the path lying in the free space.


\subsection{New results}\label{subsec:outline}
In Section \ref{sec:unsolv}, we establish that the \MinEx\ and \MaxIn\ problems are not solvable exactly by radicals over $\mathbb{Q}$.
This is proved using Observation \ref{obs:transformation} and showing that the \ShortMP\ problem
is unsolvable within  the \emph{Algebraic Computation Model over the Rational Numbers} (\acmq). In this model we can compute exactly any number that can be obtained from the rationals $\mathbb{Q}$ by applying a finite number of operations from $+,-,\times,\div,\sqrt[k]{}$, for any integer $k\geq 2$ \cite{Bajaj88, dec2012}.
The proof is based on Galois theory. 
Motivated by that, we turn our attention towards approximation algorithms for the \MinEx\ and the \MaxIn\ problems. 

In Section \ref{sec:firstAlgorithm}, we transform the \MinEx\ problem to the \ShortMP\ problem, which in turn is transformed to a shortest path problem in directed acyclic graphs (Lemma \ref{lem:appPathG}).
We propose an algorithm that approximates the weighted $xy$-monotone shortest path up to an additive error. This error is related to the lengths of the curves $T_1$ and $T_2$.
The running time of this algorithm is $\mathcal{O} \left( \frac{n^4}{\delta^2} \right)$, where $\delta$ is the approximation parameter (Theorem \ref{thm:nonoptimalResult}).
This algorithm also provides an approximate solution for the \MaxIn\ problem and with the same approximation quality (Corollary \ref{cor:nonoptimalResult2}).

In Section \ref{sec:improvement}, we improve this running time to $\mathcal{O} \left( \frac{n^3}{\delta}   \log \left( \frac{n}{\delta} \right) \right)$ (Theorem \ref{thm:finalResult}). To do so, we solve a subproblem related to forming a `small' graph over a convex set of points that preserves $L_1$-distances between certain pairs of points.  
In Section \ref{sec:FPTAS}, we discuss why FPTAS for the \MinEx\ and \MaxIn\ problems may not be feasible. 
However,  for the \MaxIn\ problem, we are able to design a $(1-\delta)$-approximation algorithm running in polynomial time, and its complexity depends upon the size of the input $n$, approximation factor $\delta$, and an additional parameter $\gamma$ defined as follows.
Consider the \MaxIn\ problem in the setting where distances are measured in the  $L_1$-metric, i.e., the distance
between a pair of points, one on trajectory $T_1$ and other on trajectory $T_2$, is measured using the $L_1$-metric. It turns out that the
free space within a cell for a given leash length is still convex, but its boundary is composed of straight line segments, instead of that of ellipses.  We define $\gamma$ to be the length of the optimal solution for \MaxIn\ problem in $L_1$-metric. Furthermore, Buchin et al. \cite{Yusu09} have shown that $\gamma$ can be computed in polynomial time.

\section{Unsolvability of \MinEx\ and \MaxIn\ Problems}\label{sec:unsolv}
Observation \ref{obs:transformation} implies that solving
the \MinEx\ (respectively, \MaxIn) problem is equivalent to finding a weighted shortest (respectively, longest) $xy$-monotone path, connecting  $\sPoint$ and $\tPoint$ in $F$, where the forbidden (respectively, free) space is weighted with $1$ and the rest with $0$, i.e., solving the \ShortMP\ (respectively, \LongMP) problem.  The difficulty is that these weighted path problems need to be solved through obstacles which have curved (elliptical) boundaries. In Theorem \ref{thm:unsolvable} we prove that the \ShortMP\ problem is
not solvable within the \acmq.
In the \acmq,
the usual arithmetic operations and the extraction of $k$-th roots
for any positive integer $k$ are available at unit cost.
In this model, each storage location is capable of holding any element of $\mathcal{C}$, where $\mathcal{C}$ is the following set.
For all $q,q'\in\mathbb{Q}$,
the following numbers are elements of $\mathcal{C}$ for any positive integer $k$: $q$, $q+q'$, $q-q'$, $q\times q'$, $q\div q'$ and $\sqrt[k]{q}$.
Notice that $\mathcal{C}$ contains complex numbers.

We can think of this model of computation from two different angles:
the algebraic point of view and the computer science point of view.
Let $p_d(x) = 0$ be a polynomial equation of degree $d$ with coefficients in $\mathcal{C}$.
From classical algebra,
we know that if $d\leq 4$,
then all the solutions to this equation are elements of $\mathcal{C}$
and can therefore be computed in $O(1)$ time within the \acmq.
In other words,
when $d\leq 4$,
there is a formula to solve $p_d(x) = 0$
that involves a finite number of arithmetic operations and $k$-th roots.
In this case,
we say that $p_d$ is solvable by radicals.
From Galois theory,
we know that for any $d \geq 5$,
there exist $p_d$'s for which no solution to $p_d(x) = 0$ belong to $\mathcal{C}$.
Therefore,
if $d \geq 5$,
such an equation cannot be solved in general within the \acmq.
In other words,
when $d \geq 5$,
there is no general formula to solve $p_d(x) = 0$
that involves a finite number of arithmetic operations and $k$-th roots.
In this case,
we say that $p_d$ is not solvable by radicals.
However,
there are some polynomial equations of degree $d \geq 5$ that can be solved by radicals.

The solvability of $p_d(x)=0$ by radicals
is determined by its \emph{Galois group} $\textrm{Gal}(p_d)$.
Indeed,
when $p_d$ is irreducible,
$p_d(x) = 0$ is solvable by radicals if and only if $\textrm{Gal}(p_d)$ is a \emph{solvable group}
(refer to~\cite{Dummit03} for an introduction to solvable groups).
Let us look at an example.
Consider $p_4(x) = x^4-2x^2+9$.
There is a general formula for solving quartic equations,
but we will solve it in a way that gives an intuition of what is a Galois group
and what is a solvable group.
We observe that $p_4(x) = \left(x^2\right)^2-2\left(x^2\right)^1+9$.
Therefore,
$x^2 = 1\pm 2i\sqrt{2}$,
from which $x = \pm\sqrt{1\pm 2i\sqrt{2}} = \pm i \pm\sqrt{2}$.
There are two steps in this solution.
At the beginning of the first step,
we have $p_4(x)$ that is a polynomial with coefficients in the field $\mathbb{Q}$.
At the end of the first step,
we have two quadratic polynomials: $x^2 - \left(1 \pm 2i\sqrt{2}\right)$.
These polynomials have coefficients in the field $\mathbb{Q}\left[i\sqrt{2}\right]$,
where $\mathbb{Q}\left[i\sqrt{2}\right]$ is the smallest field that contains $\mathbb{Q}$ and $i\sqrt{2}$.
At the end of the second step,
we have four linear polynomials:
$x - \left(\pm i \pm\sqrt{2}\right)$.
These polynomials have coefficients in the field $\mathbb{Q}\left[i,\sqrt{2}\right]$,
where $\mathbb{Q}\left[i,\sqrt{2}\right]$ is the smallest field that contains $\mathbb{Q}$ and $i$ and $\sqrt{2}$.
The field $\mathbb{Q}\left[i,\sqrt{2}\right]$ is also the smallest field that contains all the roots of $p_4$.
It is called the \emph{splitting field} of $p_4$.
Notice that $\mathbb{Q}$ is a subfield of $\mathbb{Q}\left[i\sqrt{2}\right]$,
which is a subfield of $\mathbb{Q}\left[i,\sqrt{2}\right]$.
Intuitively,
to solve a polynomial equation means finding such a chain of fields.

The Galois group $\textrm{Gal}(p_n)$ of an irreducible polynomial $p_n$ is the group of automorphisms of the splitting field $\mathbb{F}$ of $p_n$.
In our example,
$\mathbb{F} \cong \mathbb{Q}\left[i,\sqrt{2}\right]$
and $\textrm{Gal}(p_4) \cong \langle \sigma,\tau \mid \sigma^2 = \tau^2 = (\sigma\tau)^2 = 1 \rangle \cong V_4$,
where $V_4$ is the Klein group.
We can define $\sigma(i) = -i$, $\sigma(\sqrt{2}) = \sqrt{2}$, $\tau(i) = i$, $\tau(\sqrt{2}) = -\sqrt{2}$
and $\sigma(x) = \tau(x) = x$ for all $x\in\mathbb{Q}$.
There is a one-to-one correspondence between the lattice of subfields of $\mathbb{F}$
and the lattice of subgroups of $\textrm{Gal}(p_n)$.
Indeed,
for each subgroup $G$ of $\textrm{Gal}(p_n)$,
there is one subfield $\mathbb{K}$ of $\mathbb{F}$ that is \emph{fixed} by $G$
and vice-versa.
In our example,
$\mathbb{Q}[i]$ is fixed by $\langle 1,\tau \rangle$.
Indeed,
for any $x\in \mathbb{Q}[i]$, $\tau(x) = x$.
The field $\mathbb{Q}[\sqrt{2}]$ is fixed by $\langle 1,\sigma \rangle$,
$\mathbb{Q}[i\sqrt{2}]$ is fixed by $\langle 1,\sigma\tau \rangle$
and $\mathbb{Q}[i,\sqrt{2}]$ is fixed by $\langle 1 \rangle$.
The chain of fields that corresponds to the solution of a polynomial equation
can therefore be thought of as a chain of groups.

When we solve a polynomial equation by radicals,
we travel across a chain of fields that satisfies the following property.
If $\mathbb{K}_1$ and $\mathbb{K}_2$ are two consecutive fields in the chain,
then $\mathbb{K}_2 \cong \mathbb{K}_1[\sqrt[k]{\alpha}]$ for a positive integer $k$ and an $\alpha\in \mathbb{K}_1$.
We can prove
(refer to~\cite{Dummit03})
that in this case,
the corresponding chain of groups satisfies the following property.
If $G_1$ and $G_2$ are two consecutive groups in the chain,
then $G_1$ is a normal subgroup of $G_2$.
In this case,
we say that $\textrm{Gal}(p_n)$ is solvable.
In our example,
$\langle 1\rangle$ is a normal subgroup of $\langle 1,\sigma\tau\rangle$
which is a normal subgroup of $V_4$.

In the field of computer science,
the \acmq was first studied by Bajaj~\cite{Bajaj88}
and the name \emph{Algebraic Computation Model over the Rational Numbers} (\acmq)
was introduced by De Carufel et al.~\cite{dec2012}.
Bajaj proved that the Fermat-Weber problem cannot be solved within the \acmq.
He established a criteria
(refer to Lemma~\ref{lemma bajaj})
that helps deciding whether $\textrm{Gal}(p_d)$ is solvable.
Let $S_d$ be the symmetric group over $d$ elements.
Bajaj's criteria concludes that $\textrm{Gal}(p_d) \cong S_d$
if it is the case
or does not conclude otherwise.
The key observation is that $S_d$ is solvable if and only if $d \leq 4$.
Hence,
to prove that a problem cannot be solved within the \acmq,
it suffices to find an instance that leads to a polynomial equation $p_d$
such that $\textrm{Gal}(p_d) \cong S_d$
(with $d\geq 5$).
The following simplified version of Bajaj's lemma appeared in~\cite{dec2012},
\begin{lemma}[Bajaj]
\label{lemma bajaj}
Let $p_d$ be a polynomial of even degree $d\geq 6$.
Suppose that there are three prime numbers $q_1$, $q_2$ and $q_3$ that do
not divide the discriminant $\Delta(p_d)$ of $p_d$ and such that
\begin{eqnarray}
\label{thm bajaj cond 1}
p_d(x) &\equiv& \overline{p}_{d}(x) \pmod{q_1} \enspace,\\
\label{thm bajaj cond 2}
p_d(x) &\equiv& \overline{p}_1(x)\overline{p}_{d-1}(x) \pmod{q_2} \enspace,\\
\label{thm bajaj cond 3}
p_d(x) &\equiv& \overline{p}_1'(x)\overline{p}_2(x)\overline{p}_{d-3}(x) \pmod{q_3} \enspace,
\end{eqnarray}
where $\overline{p}_d(x)$ is an irreducible polynomial of degree $d$ modulo $q_1$;
$\overline{p}_{d-1}(x)$ (respectively $\overline{p}_1(x)$) is an irreducible polynomial of degree
$d-1$ (respectively of degree $1$) modulo $q_2$;
$\overline{p}_{d-3}(x)$ (respectively $\overline{p}_1'(x)$ and $\overline{p}_2(x)$) is an irreducible polynomial of degree
$d-3$ (respectively of degree $1$ and of degree $2$) modulo $q_3$. Then
$\textrm{Gal}(p_d) \cong S_d$.

If $d\geq 5$ is odd,
the same result holds if we replace (\ref{thm bajaj cond 3}) by
\begin{eqnarray}
\label{thm bajaj cond 4}
p_d(x) &\equiv& \overline{p}_2(x)\overline{p}_{d-2}(x) \pmod{q_4} \enspace,
\end{eqnarray}
where $q_4$ is a prime number such that $q_4\not|\Delta(p_d)$
and $\overline{p}_{d-2}(x)$ (respectively $\overline{p}_2(x)$) is an irreducible polynomial of degree
$d-2$ (respectively of degree $2$) modulo $q_4$.
\end{lemma}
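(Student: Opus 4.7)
The plan is to combine Dedekind's theorem on cycle types with the classical fact that a $2$-transitive subgroup of $S_d$ containing a transposition must be all of $S_d$. Dedekind's theorem states that for any prime $q\nmid\Delta(p_d)$, if $p_d$ factors modulo $q$ into distinct irreducibles of degrees $d_1,\ldots,d_k$, then $\textrm{Gal}(p_d)$, viewed as a subgroup of $S_d$ via its action on the roots of $p_d$, contains a permutation of cycle type $(d_1,\ldots,d_k)$. This is precisely why the hypotheses insist that $q_1,q_2,q_3$ (or $q_4$) do not divide $\Delta(p_d)$: it guarantees that the reductions are separable, so Dedekind applies.

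Apply Dedekind to each of the prescribed congruences in turn. Condition~\eqref{thm bajaj cond 1} produces a $d$-cycle in $\textrm{Gal}(p_d)$, so the group is transitive on the $d$ roots; in particular $p_d$ is already irreducible over $\mathbb{Q}$. Condition~\eqref{thm bajaj cond 2} produces an element of cycle type $(1,d-1)$, which fixes one root and cycles the remaining $d-1$. Because this element lies in the stabilizer of the fixed root and acts as a single cycle on the complementary roots, the point stabilizer is itself transitive on those $d-1$ points, which upgrades the transitivity of $\textrm{Gal}(p_d)$ to $2$-transitivity.

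It remains to exhibit a transposition. For even $d\geq 6$, Condition~\eqref{thm bajaj cond 3} yields $\sigma\in\textrm{Gal}(p_d)$ of cycle type $(1,2,d-3)$. Since $d$ is even, $d-3$ is odd, so $\sigma^{d-3}$ annihilates the $(d-3)$-cycle while preserving the $2$-cycle, producing a genuine transposition. For odd $d\geq 5$, Condition~\eqref{thm bajaj cond 4} yields an element of cycle type $(2,d-2)$; since $d-2$ is again odd, raising to the power $d-2$ isolates a transposition in the same way. Finally, any $2$-transitive subgroup of $S_d$ containing a transposition $(a\,b)$ also contains all of its conjugates by $2$-transitivity, hence every transposition on $\{1,\ldots,d\}$, and therefore coincides with $S_d$.

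The main obstacle I anticipate is not in any single step but in the parity bookkeeping: one must verify that, in each case, the exponent used really kills the long cycle while leaving the $2$-cycle intact, which is why the lemma splits into an even-$d$ case and an odd-$d$ case with a different auxiliary factorization. Once that is checked, together with a careful statement of Dedekind's theorem (which requires $p_d\bmod q$ to be squarefree---a condition ensured by $q\nmid\Delta(p_d)$), the remaining ingredients---$2$-transitivity from a $(d-1)$-cycle inside a point stabilizer, and generation of $S_d$ by transpositions---are entirely elementary and need no machinery beyond the standard permutation-group theory in~\cite{Dummit03}.
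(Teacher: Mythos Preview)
Your proof is correct and follows the same outline the paper sketches in the paragraph immediately after the lemma: condition~\eqref{thm bajaj cond 1} yields transitivity, condition~\eqref{thm bajaj cond 2} supplies a $(d-1)$-cycle, and condition~\eqref{thm bajaj cond 3} (or~\eqref{thm bajaj cond 4}) supplies an element from which a transposition can be extracted. The paper does not prove the lemma in full---it is cited from Bajaj and~\cite{dec2012}---and merely observes that ``these two elements, together with the transitivity of $\textrm{Gal}(p_d)$, imply that $\textrm{Gal}(p_d)\cong S_d$''; your write-up fills in exactly the missing steps (Dedekind's theorem, the upgrade to $2$-transitivity via the point stabilizer, the parity argument isolating the transposition) that the paper leaves to the reader.
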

Observe that 
\eqref{thm bajaj cond 1} implies that $p_d(x)$ is irreducible,
which implies that $\textrm{Gal}(p_d)$ is a \emph{transitive} group.
(\ref{thm bajaj cond 2}) and (\ref{thm bajaj cond 3})
guarantee the existence of a $(d-1)$-cycle
and of an element with cycle decomposition $(2,d-3)$ in $\textrm{Gal}(p_d)$.
These two elements,
together with the transitivity of $\textrm{Gal}(p_d)$,
imply that $\textrm{Gal}(p_d)\cong S_d$.

One of the usual models of computation for computational geometers is the real-RAM model
(refer to Preparata and Shamos for instance).
This model is more general than the \acmq.
Indeed,
it enables to manipulate any real number
and provides transcendental functions (at unit cost),
such as: trigonometric and logarithmic functions.
However,
a significant amount of classical problems in computational geometry can be solved within the \acmq
since their solution involves only the arithmetic operations and the square root.

The strategy to prove Theorem~\ref{thm:unsolvable} is as follows. 
We provide an example of two trajectories for which
both the length and the coordinates of the bending points of any weighted shortest $xy$-monotone path cannot be computed by radicals. We reduce such a computation to the solution of a polynomial equation of degree $8$. We show that the Galois group of this polynomial is isomorphic to $S_8$.
\begin{theorem}\label{thm:unsolvable}
Let $T_1$ and $T_2$ be two polygonal curves. Denote by $\sPoint$ (respectively, by $\tPoint$) the bottom left corner (respectively, the upper right corner) of their deformed free-space diagram F. Let $\pi_{\sPoint \tPoint}$ be any weighted shortest $xy$-monotone path (with respect to $L_1$-metric) from $\sPoint$ to $\tPoint$. In this setting, the \ShortMP\ problem is unsolvable within the \acmq, i.e.,  in general, both the length and the coordinates of the bending points of $\pi_{\sPoint \tPoint}$ cannot be computed by radicals.
\end{theorem}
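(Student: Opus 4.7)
My plan is to follow the outline stated just before the theorem: construct a small explicit instance of \ShortMP{} whose optimum is controlled by a single univariate polynomial equation $p_8(x)=0$ with rational coefficients, and then invoke Lemma~\ref{lemma bajaj} to conclude that $\textrm{Gal}(p_8)\cong S_8$. Because $S_8$ is not solvable (it is solvable only for $d\leq 4$), the bending point coordinates of the optimal $xy$-monotone path, as well as its $L_1$-length, cannot lie in $\mathcal{C}$, so they cannot be produced from $\mathbb{Q}$ by a finite number of arithmetic operations and radical extractions. By Observation~\ref{obs:transformation}, this will transfer the unsolvability from \ShortMP{} back to \MinEx{} and, by the duality discussed after~\eqref{eq:quality2}, to \MaxIn{} as well.

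\textbf{Step 1: a minimal hard instance.} I would take $T_1$ and $T_2$ to be very short polygonal curves (just a few segments with rational vertices) chosen so that the deformed free-space diagram $F$ has all cells entirely free except one, and that single cell contains one elliptical free region surrounded by forbidden space (or its complement, depending on the geometry of the chosen cell). Because any $xy$-monotone path in the forbidden space of $F$ pays its $L_1$-length as weight while any $xy$-monotone path in the free space costs nothing, the optimum $\pi_{\sPoint\tPoint}$ will enter the free ellipse at some point $\mathbf{p}_0=(x_0,y_0)$ on its boundary and leave at some point $\mathbf{p}_1=(x_1,y_1)$, using any $xy$-monotone staircase inside to connect them at zero cost. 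By $L_1$-invariance of monotone staircases, the total cost of $\pi_{\sPoint\tPoint}$ reduces to a function of $(x_0,y_0,x_1,y_1)$ alone, namely the $L_1$-distance from $\sPoint$ to $\mathbf{p}_0$ plus the $L_1$-distance from $\mathbf{p}_1$ to $\tPoint$, subject to the two boundary equations and the $xy$-monotonicity constraints $x_0\leq x_1$, $y_0\leq y_1$.

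\textbf{Step 2: reduction to a degree-$8$ polynomial.} Writing the Lagrange stationarity conditions for this constrained minimization, using that the ellipse equation has the form $a(x-c_x)^2+b(y-c_y)^2+c(x-c_x)(y-c_y)=\varepsilon^2$ with rational coefficients, and eliminating the Lagrange multipliers and one of the coordinates using resultants, I would produce a univariate polynomial $p_8\in\mathbb{Q}[x]$ whose roots are exactly the admissible values of (say) $x_0$. The expectation from the paper's own wording is that this polynomial has degree exactly $8$; the instance in Step~1 has to be tuned so that no accidental factorization drops the degree. If radicals sufficed to compute $x_0$, then $p_8$ would be solvable by radicals, so it is enough to rule this out.

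\textbf{Step 3: applying Bajaj's lemma, and the main obstacle.} With $p_8$ in hand, I would exhibit three primes $q_1,q_2,q_3$ that do not divide the discriminant $\Delta(p_8)$ and such that $p_8 \bmod q_1$ is irreducible of degree $8$; $p_8 \bmod q_2$ factors as an irreducible linear times an irreducible of degree $7$; and $p_8 \bmod q_3$ factors as an irreducible linear times an irreducible quadratic times an irreducible of degree $5$. By Lemma~\ref{lemma bajaj} this forces $\textrm{Gal}(p_8)\cong S_8$. The genuine work, and the main obstacle, is the tuning of the instance in Step~1 so that the resulting $p_8$ has integer coefficients of modest size and simultaneously admits the three cycle patterns required by Bajaj's criterion; once a candidate instance is fixed, the verification of irreducibility modulo each $q_i$ is a finite, mechanical finite-field computation. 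Having established $\textrm{Gal}(p_8)\cong S_8$, the polynomial is not solvable by radicals, the coordinates and the length of $\pi_{\sPoint\tPoint}$ fall outside $\mathcal{C}$, and the theorem follows.
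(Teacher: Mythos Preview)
Your overall strategy---build an explicit rational instance, reduce the optimality condition to a univariate polynomial, and certify $\textrm{Gal}\cong S_8$ via Lemma~\ref{lemma bajaj}---is exactly the paper's. The genuine gap is in Step~1: a \emph{single} free ellipse cannot produce a degree-$8$ obstruction. With one ellipse in one cell, the $L_1$ cost $(x_0+y_0)+\bigl(\tPoint_x-x_1\bigr)+\bigl(\tPoint_y-y_1\bigr)$ is separable in the entry point $(x_0,y_0)$ and the exit point $(x_1,y_1)$, so the two Lagrange systems decouple. Each reduces to one linear relation (equating the two partials of the conic) together with the conic constraint, hence a quadratic, and the optimum is expressible with a single square root. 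This is precisely what happens in the paper's Case~1 (the path meets only $\mathcal{E}_1$), which is solved in closed form with a $\sqrt{2}$ and is therefore solvable by radicals. So the ``expectation'' that your elimination yields degree~$8$ is unfounded for a one-ellipse instance.

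The paper obtains degree~$8$ by using \emph{two} cells, hence two ellipses $\mathcal{E}_1,\mathcal{E}_2$, and arguing that the optimal monotone path exits $\mathcal{E}_1$ and enters/exits $\mathcal{E}_2$ at a common height $h$. The cost as a function of the single parameter $h$ then contains \emph{two independent} square roots of quadratics in $h$; setting the derivative to zero and clearing both radicals is what manufactures the degree-$8$ polynomial. The concrete instance is $T_1=abc$, $T_2=de$ with $a=(0,0)$, $b=(1,0)$, $c=\bigl(-1,-\tfrac{31}{240}\bigr)$, $d=\bigl(-\tfrac12,\tfrac34\bigr)$, $e=\bigl(\tfrac52,\tfrac{13}{8}\bigr)$, $\varepsilon=1$, and the primes for Lemma~\ref{lemma bajaj} are $101$, $17$, $71$. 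Your Steps~2--3 go through unchanged once you replace the one-ellipse construction by such a two-ellipse one that couples the optimization through a shared parameter.
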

\begin{proof}
Take $T_1=abc$, $T_2=de$ and $\varepsilon = 1$, where $a=(0,0)$, $b=(1,0)$, $c=\left(-1,-\frac{31}{240}\right)$, $d=\left(-\frac{1}{2},\frac{3}{4}\right)$, $e=\left(\frac{5}{2},\frac{13}{8}\right)$ (see Figure~\ref{figure example}a). Hence, in the deformed free-space diagram, $\sPoint=(0,0)$ and $\tPoint=\left(\frac{721}{240},\frac{25}{8}\right)$ (refer to Figure~\ref{figure example}b).
\begin{figure}[ht]
	\begin{center}
		\begin{tabular}{ccc}
			\includegraphics[height=3cm]{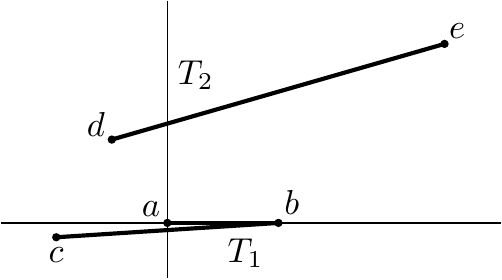}&&\includegraphics[height=5cm]{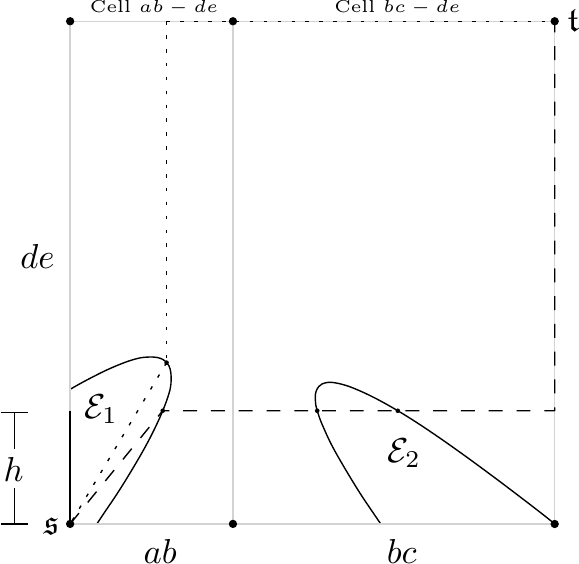}\\
			{(a)}&&{(b)}\\
		\end{tabular}
	\end{center}
	\caption{a) Two polygonal curves, $T_1$ and $T_2$. b) The deformed free-space diagram $F$ for $T_1$ and $T_2$. The dashed line is a weighted shortest $xy$-monotone path from $\sPoint$ to $\tPoint$. The dotted line is a weighted shortest $xy$-monotone path that crosses $\mathcal{E}_1$ but not $\mathcal{E}_2$.}
	\label{figure example}
\end{figure}
The parametric equations of  $ab$, $bc$ and $de$ are respectively
\begin{eqnarray*}
ab:a+\frac{u}{|ab|}(b-a) &=& u(1,0) \qquad (0 \leq u \leq 1),\\
bc:b+\frac{u}{|bc|}(c-b) &=& (1,0)+\frac{u}{481/240}\left(-2,-\frac{31}{240}\right) \qquad \left(0 \leq u \leq \frac{481}{240}\right),\\
de:d+\frac{u}{|de|}(e-d) &=& \left(-\frac{1}{2},\frac{3}{4}\right)+\frac{u}{25/8}\left(3,\frac{7}{8}\right) \qquad \left(0 \leq u \leq \frac{25}{8}\right).
\end{eqnarray*}

In cell $ab$-$de$,
we have the ellipse
$\mathcal{E}_1 : \left(x - \frac{24}{25}\,y + \frac{1}{2}\right)^2 + \left(\frac{7}{25}\,y + \frac{3}{4}\right)^2 = 1,$
where $0 \leq x \leq \frac{1}{2}$ and $0 \leq y \leq \frac{25}{28}$.
In cell $bc$-$de$,
we have the ellipse
$\mathcal{E}_2:\left(\frac{480}{481}\,(x-1) + \frac{24}{25}\,y - \frac{3}{2}\right)^2 
+ \left(\frac{31}{481}\,(x-1) + \frac{7}{25}\,y + \frac{3}{4}\right)^2 = 1,$
where $\frac{8599}{5232} \leq x \leq \frac{721}{240}$ and $0 \leq y \leq \frac{3725}{5232}$.
Since $\sPoint\in\mathcal{E}_1$,
then either (1) $\pi_{\sPoint \tPoint}$ crosses $\mathcal{E}_1$ but not $\mathcal{E}_2$ or (2)
$\pi_{\sPoint \tPoint}$ crosses both $\mathcal{E}_1$ and $\mathcal{E}_2$.\\
\noindent 1.  If $\pi_{\sPoint \tPoint}$ crosses $\mathcal{E}_1$ but not $\mathcal{E}_2$,
we find the following optimal path
by elementary calculus.
Travel in $\mathcal{E}_1$ from $\sPoint$ to
$\left(\frac{1}{14}\left(35\sqrt{2}-43\right),\frac{5}{28}\left(14\sqrt{2} - 15\right)\right)$
and then to $\tPoint$
(outside of $\mathcal{E}_1$ and outside of $\mathcal{E}_2$).
The length of this path is $\frac{1}{240}\left(2851 - 1200\sqrt{2}\right) \approx 4.80810.$\\
2. If $\pi_{\sPoint \tPoint}$ crosses both $\mathcal{E}_1$ and $\mathcal{E}_2$,
then $\pi_{\sPoint \tPoint}$ must exit $\mathcal{E}_1$ at the same height it enters $\mathcal{E}_2$
since $\mathcal{E}_1$ is inclined towards $\mathcal{E}_2$.
Let $h$ be this height
and let $h'$ be the height at which $\pi_{\sPoint \tPoint}$ exits $\mathcal{E}_2$.
By elementary calculus,
we find that for an $xy$-monotone path to be shortest,
we need to have $h = h'$.

The length of such a shortest path can be expressed in the following way:
\begin{eqnarray*}
\frac{1591}{240} - \frac{49}{25}\,h - \frac{\sqrt{4375 - 4200 h - 784 h^2}}{100} - 
 \frac{\sqrt{165296875 - 212680800 h - 27373824 h^2}}{12025} \enspace,
\end{eqnarray*}
for an $h$ to be determined.
If we look for the values of $h$
for which the derivative of this last expression is $0$,
we find that $h$ must be a solution of
$$
\begin{array}{l}
p(h) = 585090042379589947534557557525634765625 - 3039825965000401080955586792871093750000 \, h \\ 
+ 5307213095548843266935155031210937500000 \, h^2 - 2973595218630130711131340711267500000000 \, h^3 \\ 
- 649444075888789852190828979088700000000 \, h^4 + 562445109533777824218782819614464000000 \, h^5 \\ 
+ 193996238215889538903991144689745920000 \, h^6 + 21705929355568145355212682312548352000 \, h^7 \\ 
+ 826789346560923302640987287586865152 \, h^8 = 0.
\end{array}
$$
By numerical methods,
we find $h \approx 0.50696$
and the length of the corresponding path is approximately $4.59277$,
so it is a global weighted shortest $xy$-monotone path.

The discriminant of $p$ is $\Delta(p) = 2^{226}\cdot 3^{28}\cdot 5^{115}\cdot 7^{28}\cdot 13^{36}\cdot 23^2\cdot 29^2\cdot 31^6\cdot 37^{36}\cdot 47^3\cdot 53^3\cdot 109^{24}\cdot 151^2\cdot 281\cdot 443^3\cdot 467^3\cdot 2909\cdot 3313\cdot 18959^2\cdot 1120001\cdot 33513959\cdot 89206609^2\cdot 261977539^2\cdot 28587810523 \cdot 306854901568937582895921655033\cdot 5739056544236116407796954338317251465127$.
We have
\begin{eqnarray*}
p(h) &\equiv& 54h^8+51h^7+7h^6+78h^5+50h^4+95h^3+84h^2+47h+59 \pmod{101} \enspace,\\
p(h) &\equiv& 13(h+11)(h^7+14h^6+16h^5+16h^3+9h^2+11h+9) \pmod{17} \enspace,\\
p(h) &\equiv& 64(h+52)(h^2+9h+42)(h^5+44h^4+7h^3+21h^2+31h+16) \pmod{71} \enspace.
\end{eqnarray*}
Hence,
by Lemma~\ref{lemma bajaj},
$\textrm{Gal}(p) \cong S_8$
which is not solvable.
Hence,
$p(h)=0$ is not solvable by radicals.
Consequently,
both the length of $\pi_{\sPoint \tPoint}$
and the coordinates of its bending points
cannot be computed by radicals.
\end{proof}
Combining this with Theorem \ref{thm:unsolvable} and Observation \ref{obs:transformation} we have the following result.
\begin{cor}
It is not possible to design an algorithm that can exactly solve the \MinEx\ (or \MaxIn)  problem within
the \acmq.
\end{cor}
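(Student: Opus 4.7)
The plan is to invoke Theorem~\ref{thm:unsolvable} in combination with Observation~\ref{obs:transformation} to handle \MinEx\ directly, and then to extend the conclusion to \MaxIn\ via a simple complementarity argument.

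For \MinEx, I would argue as follows. By Observation~\ref{obs:transformation}, every pair of parameterizations $(\alpha_1,\alpha_2)$ for $T_1$ and $T_2$ with respect to $\varepsilon$ corresponds to an $xy$-monotone path $\pi_{\sPoint\tPoint}$ from $\sPoint$ to $\tPoint$ in the deformed free-space diagram $F$, and $\quality{\alpha}{B}$ is exactly the $L_1$-length of the portion of $\pi_{\sPoint\tPoint}$ lying in the forbidden space. Thus computing an optimal $(\alpha_1,\alpha_2)$ is equivalent to solving \ShortMP\ on the same instance. Applied to the particular curves $T_1,T_2$ and leash length $\varepsilon=1$ constructed in the proof of Theorem~\ref{thm:unsolvable}, any exact \acmq\ algorithm for \MinEx\ would produce, by radicals, either the length of a weighted shortest $xy$-monotone path or the coordinates of its bending points, contradicting Theorem~\ref{thm:unsolvable}.

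For \MaxIn, I would exploit the identity
\[
\quality{\alpha}{B}+\quality{\alpha}{W}=|T_1|+|T_2|,
\]
which holds for every parameterization because the forbidden and free time sets cover $[0,1]$ up to a set of measure zero, so that the arclength integrals in \eqref{eq:quality} recombine to the total lengths of the two curves. The right-hand side is a sum of Euclidean edge lengths $\sqrt{\sum(\text{rational})^2}$ and therefore lies in the \acmq-computable set $\mathcal{C}$. Hence maximizing $\quality{\alpha}{W}$ over $(\alpha_1,\alpha_2)$ is equivalent to minimizing $\quality{\alpha}{B}$ over the same pairs, and an exact \acmq\ algorithm for \MaxIn\ would, on the instance of Theorem~\ref{thm:unsolvable}, immediately yield an exact \acmq\ solution to \MinEx, contradicting the previous paragraph.

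The main obstacle is one of precision rather than of mathematics: I have to be careful that ``solving'' in the statement covers both the optimal value $\Tquality{T}{B}$ (respectively $\Tquality{T}{W}$) and the witnessing parameterization, and that the correspondence established by Observation~\ref{obs:transformation} is computability-preserving within the \acmq\ (the stretching/compressing of rows and columns of $F$ uses only edge lengths, which lie in $\mathcal{C}$). Once these points are stated cleanly, both reductions are immediate and the corollary follows.
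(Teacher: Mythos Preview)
Your proposal is correct and follows the paper's own reasoning: the paper does not give a separate proof of the corollary but simply states that it follows by combining Theorem~\ref{thm:unsolvable} with Observation~\ref{obs:transformation}, which is precisely your argument for \MinEx. Your explicit complementarity identity $\quality{\alpha}{B}+\quality{\alpha}{W}=|T_1|+|T_2|$ for the \MaxIn\ case is exactly what the paper leaves implicit (it remarks earlier that a solution to one problem yields a solution to the other), so your write-up is a faithful and slightly more detailed expansion of the intended one-line derivation.
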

In the  proof of Theorem \ref{thm:unsolvable}, we show that for $n=2$, we can construct examples where we  have to solve a polynomial equation of degree $8$.
In general, we can construct examples for which  the degree of the polynomial equations involved is $\Omega(n)$.
Therefore, we cannot suppose that we are in a model of computation where polynomial equations of bounded degree can be solved in constant time. 

\section{An Approximation Algorithm 
}\label{sec:firstAlgorithm}
In this section, we present an approximation algorithm with an additive error for the  \MinEx\ problem and we will show that the computed approximate solution is an approximate solution for the \MaxIn\ problem as well.
The input to the problem consists of two polygonal curves $T_1$ and $T_2$, an arbitrary fixed leash 
length $\varepsilon \geq 0$ and an approximation parameter $\delta >0$. We want to compute a pair of parameterizations $\left( \tilde\alpha_1 , \tilde\alpha_2 \right)$ and its quality $\quality{\tilde{\alpha}}{B}$ for $T_1$ and $T_2$, such that $\quality{\tilde{\alpha}}{B}$ is a good approximation of $\Tquality{T}{B}$. We also want to construct two polygonal curves, $T_1'$ and $T_2'$ such that $\delta_{F} \left( T_1',T_2' \right) \leq \varepsilon$, that correspond to a solution for the \MinEx\  problem.
We abbreviate the deformed free-space diagram by $F$, its free-space by $\textit{W}$ and the forbidden-space by $\textit{B}$.
Recall that $F$ consists of $\mathcal{O}(n^2)$ cells and each cell is a rectangle, whose free space is (a portion of) an ellipse.
Our approach is as follows.

We have seen in Section \ref{subsec:ProblemDefinition} how to transform the \MinEx\ problem into the \ShortMP\ problem in the deformed free space diagram $F$. To design an approximation algorithm for the \ShortMP\ problem, we will define a graph $G$ over $F$, and show that for each path $\pi_{\sPoint \tPoint}$ in $F$, there exists a path $\widetilde{\pi}_{\sPoint \tPoint}$ in $G$, which stays  close to $\pi_{\sPoint \tPoint}$. Thus, paths in $G$  approximates $xy$-monotone weighted shortest paths in $F$.  Once we have a shortest path in $G$, it will be fairly straightforward to embed this path in $F$, and deduce an approximate  solution for the \MinEx\ problem.  Though we are casting a geometric problem into a combinatorial setting, to simplify our notation, we may
refer to a point $p \in F$ as also a vertex $p$ in $G$ -  the meaning will be clear from the context.  We construct $G$ as follows.\\
{\bf Step 1: Construct $F$.} First compute the free-space diagram using the algorithm of Alt and Godau \cite{Alt95}, then stretch the columns/rows of the diagram, such that widths/heights are equal to the length of the corresponding segments to obtain the deformed free-space diagram $F$. \\ 
{\bf Step 2: Construct Grid.} Add $\frac{n}{\delta}$ additional equidistant vertical and horizontal \emph{grid lines} to $F$ (Figure \ref{fig:arrangement}). 
Find the intersection of every vertical grid line $\ell_{v_i}$ with the boundary of each ellipse. For each of these intersection points, add a new horizontal \emph{intersection line}, passing through that point.
Perform analogous steps for every horizontal grid line $\ell_{h_i}$. For each intersection point add a vertical intersection line, passing through that point.\\ 
{\bf Step 3: Construct $G$.} Compute the arrangement $A$ induced by all of the grid lines, the intersection lines and the boundary of ellipses. The vertices of $G$ are the vertices in $A$. For each edge $(p,q)$ in $A$, if either $\overrightarrow{pq}$ or $\overrightarrow{qp}$ is $xy$-monotone then add the corresponding directed edge into $G$. The weight of this edge is equal to its length in $L_1$-metric if it is lying in the forbidden space, otherwise it is zero.
\begin{figure}[ht] 
		\centering
		\includegraphics[scale=.9]{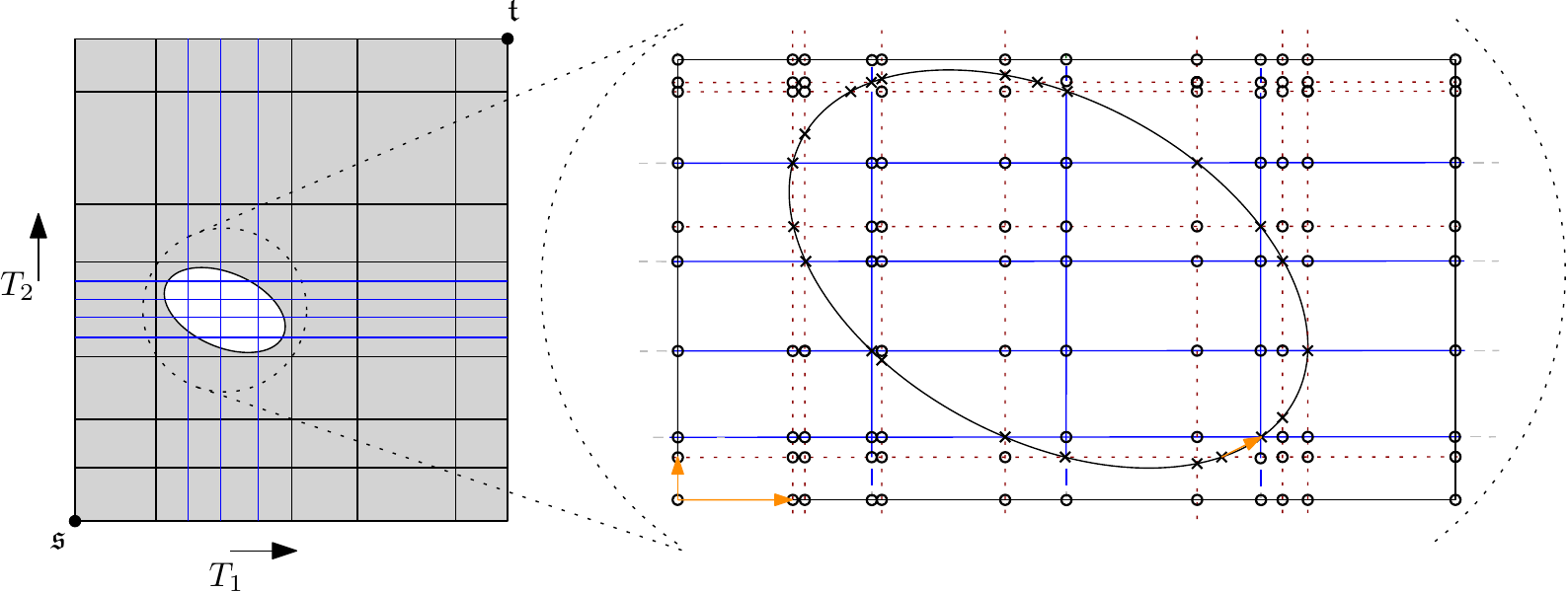} 
    		\caption{An arrangement of lines and the boundary of an ellipse. Blue solid lines are grid lines and dotted dark red lines are intersection lines. Crosses represent vertices on the boundary of the ellipse resulting from intersections of the ellipse with the lines. Circles show some of the vertices on the intersection of (grid and intersection) lines. All vertices on a line are connected by directed edges (colored orange arrows) preserving $xy$-monotone ordering. Also, two vertices on the boundary of the ellipse are joined by an edge (also colored orange), if they are consecutive and the edge is $xy$-monotone.}
   		\label{fig:arrangement}
\end{figure}

Observe that $G$ is acyclic as all of its edges are directed and are $xy$-monotone.
Compute a weighted shortest path from $\sPoint$ to $\tPoint$ in $G$, and output its corresponding geometric embedding as the desired approximate solution. In the next three lemmas,  we establish that $G$ can be used to 
provide an approximation algorithm for the  \ShortMP\ problem. First we need the following definition. 
A vertex $s'$ is \emph{directly dominated} by the point $s \in F$, if and only if, $s'$ is dominated by $s$ and there exists no vertex of $G$ in the interior of  $ss'$.

%

%
%
\newcounter{pathinG}
\setcounter{pathinG}{\value{lemcounter}}
\begin{lemma}\label{lem:subPath}
	Let $\pi_{st} \subset \textit{W}$ be an $xy$-monotone shortest path from $s$ to $t$, connecting $s$ on a grid line $\ell_s$ with $t$ on another grid line $\ell_t$ inside a parameter cell. Furthermore, let $s' \in \ell_s$ (respectively, $t' \in \ell_t$) be a vertex from $G$ directly dominated by $s$ (respectively, $t$). Then, there exists a path $\widetilde{\pi}_{s't'}$ in $G$ from $s'$ to $t'$, such that $\widetilde{\pi}_{s't'} \subset \textit{W}$.
\end{lemma}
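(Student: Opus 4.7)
The plan is to prove the lemma in three stages: first show that both $s'$ and $t'$ lie in the free space $W$; second, invoke convexity of $W$ inside a cell to obtain a straight segment $L$ from $s'$ to $t'$ contained in $W$; and third, construct a staircase path in $G$ that tracks $L$ and remains in $W$.

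For the first stage, suppose for contradiction that $s' \in B$. Since $s \in W$ (as $s$ is an endpoint of $\pi_{st} \subset W$) and $s'$ is on $\ell_s$ dominated by $s$, the segment from $s'$ to $s$ along $\ell_s$ starts in $B$ and ends in $W$; hence it must cross the boundary of $W$, which within the cell is a portion of an ellipse. The crossing point is then an intersection of $\ell_s$ with this ellipse, and by Step 3 of the construction of $G$ it is a vertex of $G$ lying strictly in the interior of the segment $ss'$. This contradicts the direct domination of $s'$ by $s$, so $s' \in W$. The same argument applied to $t$ and $t'$ on $\ell_t$ yields $t' \in W$.

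For the second and third stages, observe that $W$ inside a single parameter cell is the intersection of the rectangular cell with the closed region bounded by an ellipse, hence convex. Since $s', t' \in W$ and $s' \leq t'$ in the dominance order (the setting in which the lemma is invoked by the algorithm), the straight segment $L$ from $s'$ to $t'$ is $xy$-monotone and lies entirely in $W$. The plan is then to build $\widetilde{\pi}_{s't'}$ as an $xy$-monotone staircase in $G$ following $L$: the arrangement $A$ subdivides $W \cap$ cell into faces whose boundaries are composed of edges of $G$ (portions of grid lines, intersection lines, and ellipse arcs), and starting from $s'$ I would iteratively move from the current vertex $p$ to an adjacent vertex of $G$ in the $xy$-monotone direction by taking a boundary edge of the current face traversed by $L$. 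Convexity of $W$ together with the fact that the chosen edge lies on the boundary of a face that $L$ itself crosses guarantees each step remains in $W$, and since $L$ has finite length and $A$ has finitely many vertices the process terminates at $t'$.

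The main obstacle is the third stage: verifying that at every intermediate vertex a suitable outgoing edge of $G$ is indeed available, stays in $W$, and progresses $xy$-monotonely toward $t'$. The key design feature making this go through is the inclusion of the intersection lines. Because a horizontal intersection line is added through every crossing of a vertical grid line with an ellipse (and symmetrically for horizontal grid lines), each such crossing becomes a vertex of $G$, and consequently every arc of the ellipse lying between two consecutive crossings on $\partial W$ is itself an edge of $G$. This is what allows the staircase to ``hug'' the ellipse boundary from inside whenever $L$ runs close to it, so that the constructed path never has to exit $W$ in order to stay on edges of $G$.
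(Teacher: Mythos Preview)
Your first stage is correct and matches the paper's opening observation. The second stage contains an unjustified assertion: you claim $s' \leq t'$ in the dominance order, citing only ``the setting in which the lemma is invoked.'' The lemma's hypotheses do not give you this; it has to be derived from the construction of $G$. (It does turn out to be true: through any vertex of $G$ on a vertical grid line there is a horizontal line of the arrangement, because ellipse--grid crossings generate intersection lines, and this horizontal line creates a vertex on $\ell_t$ at height $s'_y$ dominated by $t$. But you do not supply that argument.)

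The real gap is the third stage. Your staircase construction is only a sketch, and the one concrete argument you offer---that ellipse arcs between consecutive crossings are edges of $G$, so the staircase can ``hug'' $\partial\mathcal{E}$---misses the crucial point: an arc of $\partial\mathcal{E}$ is an edge of $G$ only if it is $xy$-monotone. Where $\partial\mathcal{E}$ has negative slope, the arc is \emph{not} an edge of $G$, and nothing in your argument rules out the staircase reaching such a portion of the boundary. Convexity of $W$ guarantees the chord $L$ stays inside, but it does not guarantee that the faces $L$ crosses have $xy$-monotone boundaries usable by a path in $G$.

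The paper takes a different and more direct route: a four-case analysis on the orientations of $\ell_s$ and $\ell_t$, and in each case an explicit path is built (walk horizontally or vertically from $s'$, then along $\partial\mathcal{E}$ if the walk exits $\mathcal{E}$, then along $\ell_t$ to $t'$). The essential observation---absent from your write-up---is that when the horizontal walk first exits $\mathcal{E}$ at a point $x$, the tangent to $\partial\mathcal{E}$ at $x$ necessarily has positive slope, and convexity together with the existence of $\pi_{st}\subset W$ above forces $\partial\mathcal{E}$ to remain $xy$-monotone all the way to $\ell_t$. That is precisely what certifies the ellipse arc used is an edge of $G$; your proposal never establishes an analogue of this.
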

\begin{proof}	Since $s \in \textit{W}$ and $s'$ is directly dominated by $s$, we know $s' \in \textit{W}$ ($s$ could be equal to $s'$). We discuss the following four cases:
	\begin{enumerate}
\item[a)] $s$ and $t$ lie on two vertical lines $\ell^v_{i}$ and $\ell^v_{i+1}$ (Figure \ref{fig:subPath}a)).\\ Let $a$ be the intersection point of $\ell_{s'}$ with $\ell^v_{i+1}$. We walk on the horizontal line $\ell_{s'} \ni s'$, until we leave $\mathcal{E}$ at a point $x\in \partial\mathcal{E}$ or we reach $a$. If we encounter $a$ first, we are done. Assume  we first reach  $x$. This implies that the gradient of $\partial \mathcal{E}$ in $x$ is positive and so its tangent line $T_x$ has to be $xy$-monotone. Since $\pi_{st} \subset \textit{W}$, it follows that $\partial \mathcal{E}$ is squeezed between $T_x$, $\pi_{st}$, $\ell^v_i$ and $\ell^v_{i+1}$. We denote this region by $R$. The boundary $\partial \mathcal{E}$ cannot turn back, as this would contradict  the convexity of $\mathcal{E}$. This implies that we can start at  $x$, follow $\partial \mathcal{E}$ using only $xy$-monotone edges, stay in $R$ and reach $\ell_{i+1}^v$ at a point $y\in R$. Since all the utilized edges lie in $\textit{W}$, the final path is in $\textit{W}$.
\item[b)] $s$ and $t$ lie on two horizontal lines $\ell^h_{i}$ and $\ell^h_{i+1}$ (Figure \ref{fig:subPath}b)).\\ 
Use the same argument as in a), but first walk upwards.
\item[c)] $s \in \ell^v_i$ and $t \in \ell^h_j$ (Figure \ref{fig:subPath}c)).\\ Use the same argument as in a), but first move to $s$ and then walk upwards.
\item[d)] $s \in \ell^h_i$ and $t \in \ell^v_j$ (Figure \ref{fig:subPath}d)).\\ Use the same argument as in a), but first move to $s$ and then walk to the right.
	\end{enumerate} 
	After reaching the point $y \in \partial \mathcal{E} \cap \ell_t$, walk on $\ell_t$ from $y$ towards $t$, until the closest vertex $t'$ lying below and to the left of $t$ is reached. Since $t,y \in \textit{W}$, that guarantees again $t' \in \textit{W}$.
\end{proof}

\begin{figure}[ht]
    		\begin{center}
     			 \begin{tabular}{cccc}
       				\includegraphics[height=3cm]{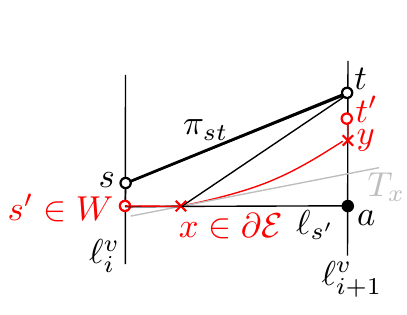} & \includegraphics[height=3cm]{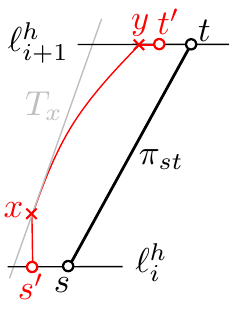} & \includegraphics[height=3cm]{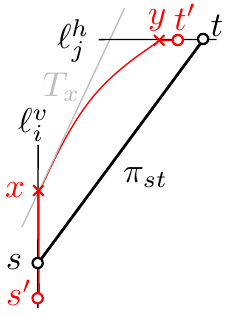} & \includegraphics[height=3cm]{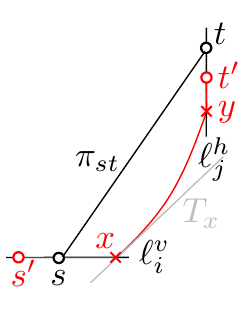}\\
       				{\small a) $s \in \ell^v_{i}$ and $t \in \ell^v_{i+1}$.}  & {\small b) $s \in \ell^h_{i}$ and $t \in \ell^h_{i+1}$.} & {\small c) $s \in \ell^v_i$ and $t \in \ell^h_j$.} & {\small d) $s \in \ell^h_i$ and $t \in \ell^v_j$.}\\
      			\end{tabular}
    		\end{center}
    		\vspace*{-12pt}
    		\caption{Four cases of configurations for $s$ and $t$, and their corresponding grid lines that are used in the proof  of Lemma \ref{lem:subPath}.}
   		\label{fig:subPath}
\end{figure}

For the next lemma we need the following notation. Each parameter cell $C^{i,j}$ of $F$ defines a rectangle, having the left side $C_L^{i,j}$, the right side $C_R^{i,j}$, the bottom side $C_B^{i,j}$ and the top side $C_T^{i,j}$. Furthermore, let $||\pi||$ be the weighted length of a path $\pi$ in $F$, measured using the $L_1$-metric. Let $T$ be a polygonal curve composed of $k$ segment, $T=(t_0 t_1 , t_1 t_2 ,\mathellipsis, t_{k-1} t_k)$, then the length of $T$, $|T|=\sum\limits_{i=1}^k |t_{i-1} t_{i}|$.

\begin{lemma}\label{lem:cellSubPath}
	Let $C^{i,j}$ be an arbitrary parameter cell of  $F$. Let $\pi_{st}$ be a shortest $xy$-monotone path in $C^{i,j}$, connecting a point $s \in C_{L}^{i,j} \cup C_B^{i,j}$ with an arbitrary point $t \in C_{T}^{i,j} \cup C_R^{i,j}$. Let $\ell_s$ (respectively, $\ell_t$) be the side of $C^{i,j}$ on which $s$ (respectively, $t$) lies (see Figure \ref{fig:cellSubPath}). Let $s' \in \ell_s$ (respectively, $t' \in \ell_t$) be a vertex directly dominated by $s$ (respectively, $t$). Then, there exists an $xy$-monotone path $\widetilde{\pi}_{s't'}$, such that $||\widetilde{\pi}_{s't'}|| \leq ||\pi_{st}|| + 8   \frac{\delta}{n} \cdot\max \left\lbrace |T_1|, |T_2| \right\rbrace$.

\end{lemma}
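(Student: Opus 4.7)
The plan is to construct $\widetilde{\pi}_{s't'}$ by concatenating three pieces: (i) an $xy$-monotone grid staircase in $G$ from $s'$ to a transfer vertex $p$ near the point where $\pi_{st}$ first enters $W$; (ii) a $G$-path from $p$ to a second transfer vertex $q$ near the point where $\pi_{st}$ exits $W$, obtained from Lemma~\ref{lem:subPath} and contributing weight zero; and (iii) a closing staircase from $q$ to $t'$. The excess weight will then be controlled by four applications of the $L_1$ triangle inequality, each producing an error of at most twice the grid spacing $\Delta := (\delta/n)\max\{|T_1|,|T_2|\}$, totalling $8\Delta$.

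I would first put $\pi_{st}$ into a canonical form with at most one contiguous $W$-excursion. Since the free-space region $W\cap C^{i,j}$ is the intersection of the ellipse's interior with a rectangle and is therefore convex, any two disjoint $W$-excursions ending at $e_2\in\partial\mathcal{E}$ and resuming at $e_3\in\partial\mathcal{E}$ with $e_3\geq e_2$ can be merged by replacing the $xy$-monotone $B$-bridge between them by the $xy$-monotone straight segment from the first entry $e_1$ to the last exit $e_4$ through $W$; this strictly reduces the weighted length and contradicts optimality. Hence $\pi_{st}$ decomposes into an $L_1$-straight $B$-segment $s\to e_1$, an $xy$-monotone $W$-subpath $e_1\to e_2$ with $e_1,e_2\in\partial\mathcal{E}$, and an $L_1$-straight $B$-segment $e_2\to t$, so that $\|\pi_{st}\|=|s-e_1|_{L_1}+|e_2-t|_{L_1}$; the degenerate cases ($s\in W$, $t\in W$, $\pi_{st}\subset B$, or $\mathcal{E}\cap C^{i,j}=\emptyset$) collapse one or more of these pieces and are handled analogously.

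Next I would select the transfer vertices. Because $G$ contains every intersection of $\partial\mathcal{E}$ with a grid line or intersection line, and the grid has spacing at most $\Delta$ in each coordinate, I can pick a $G$-vertex $p\in\partial\mathcal{E}\subseteq W$ with $s'\leq p$ and $|e_1-p|_{L_1}\leq 2\Delta$, and symmetrically a $G$-vertex $q\in\partial\mathcal{E}\subseteq W$ with $q\leq t'$ and $|e_2-q|_{L_1}\leq 2\Delta$, arranged so that an $xy$-monotone $W$-path from $p$ to $q$ exists, namely the concatenation of the straight $W$-segments $p\to e_1$ and $e_2\to q$ with the existing $W$-subpath $e_1\to e_2$, all inside the convex set $W$. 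Piece (i) is then any $xy$-monotone sequence of $G$-edges from $s'$ to $p$; its weight is at most its $L_1$-length $|s'-p|_{L_1}$. Piece (ii) is supplied by Lemma~\ref{lem:subPath} applied to this $W$-path from $p$ to $q$, giving a $G$-path in $W$ of weight zero. Piece (iii) is a staircase from $q$ to $t'$ of weight at most $|q-t'|_{L_1}$.

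Twice applying the $L_1$ triangle inequality gives $\|\widetilde{\pi}_{s't'}\|\leq|s'-p|_{L_1}+|q-t'|_{L_1}\leq\|\pi_{st}\|+|s'-s|_{L_1}+|e_1-p|_{L_1}+|q-e_2|_{L_1}+|t-t'|_{L_1}$, and each of the four correction terms is at most $2\Delta$ (at most one grid spacing along each of the two axes), yielding the desired bound $\|\widetilde{\pi}_{s't'}\|\leq\|\pi_{st}\|+8\Delta$. The principal obstacle I anticipate is precisely the existence claim for transfer vertices $p,q$ with the simultaneous domination and $W$-containment properties, particularly in degenerate configurations where $e_1$ or $e_2$ lies near an extremal tangent of $\partial\mathcal{E}$ so that the ``lower-left'' (respectively ``upper-right'') neighbourhood of $e_1$ (respectively $e_2$) on $\partial\mathcal{E}$ may be empty; this is handled by carefully exploiting the intersection lines added in Step~2 of the construction of $G$, which produce additional $G$-vertices densely enough along $\partial\mathcal{E}$ to guarantee a valid choice, and, if necessary, by taking $p,q$ slightly inside $W$ rather than on $\partial\mathcal{E}$ and extending the $W$-path through a short straight segment.
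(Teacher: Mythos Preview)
Your plan follows the paper's strategy: use convexity of the free region to reduce $\pi_{st}$ to a single $W$-excursion $e_1\to e_2$, replace that excursion by a zero-weight $G$-path, and charge the excess weight to four correction terms near $s,e_1,e_2,t$, each bounded by $2\Delta$. The $L_1$ triangle-inequality accounting is the same as the paper's.

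There is one genuine gap, in your invocation of Lemma~\ref{lem:subPath}. That lemma, as stated and proved, requires the endpoints of the $W$-path to lie on \emph{grid lines}: its proof walks along the horizontal (or vertical) line through the directly-dominated vertex, and the guarantee that this walk meets $\partial\mathcal{E}$ at a $G$-vertex relies on the endpoint lying on a grid line (since only grid-line/ellipse intersections spawn intersection lines). Your transfer vertices $p,q\in\partial\mathcal{E}$ may lie only on intersection lines, so the lemma does not apply to the path $p\to q$ as written; and your own final paragraph concedes that the existence of $p,q$ with the required domination and proximity properties is not yet established. The paper sidesteps both issues by placing the transfer points on grid lines inside $W$ rather than on $\partial\mathcal{E}$: it takes $v_a$ (respectively $v_b$) to be the first (respectively last) grid-line crossing of $\pi_{st}$ after the entry point $a=e_1$ (respectively before the exit point $b=e_2$), applies Lemma~\ref{lem:subPath} piecewise between consecutive grid lines to obtain the zero-weight path $\widetilde{\pi}_{v_a'v_b'}$, and then absorbs the two grid-subcell hops $w_a'\to v_a'$ and $v_b'\to w_b'$ near the entry and exit (each of $L_1$-length at most $2\Delta$, since each spans a single grid subcell) into the error budget. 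With this adjustment your argument becomes the paper's.
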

\begin{proof}
	We will construct a path $\widetilde{\pi}_{s't'} \subset G$ such that it connects $s'$ and $t'$,  and it passes through $\textit{W}$ whenever $\pi_{st}$ passes through $\textit{W}$. This guarantees that the weighted length of $\widetilde{\pi}_{s't'}$, in $L_1$-metric, is not ``much'' bigger than that the weighted length of $\pi_{st}$. Let $\mathcal{E}$ be the ellipse describing $\textit{W}$ in $C^{i,j}$. Since $\mathcal{E}$ is convex and $\pi_{st}$ is a shortest path, it implies that there exists at most one entrance point $a$ into $\mathcal{E}$ and at most one exit point $b$ from $\mathcal{E}$.
	
	\begin{figure}[!btp]
		\centering
		\includegraphics[scale=1]{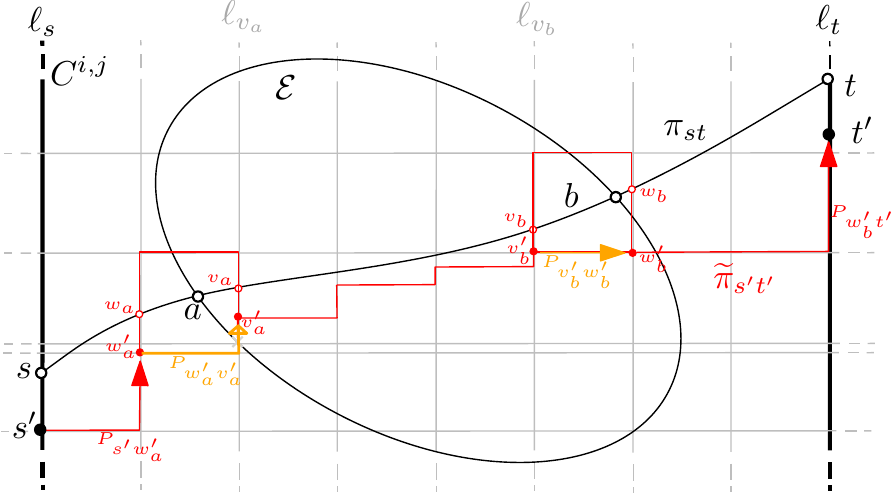}%
		\caption{Illustration of the proof of Lemma \ref{lem:cellSubPath}.}%
		\label{fig:cellSubPath}
	\end{figure}	
Let $v_a$ be the first point on $\pi_{st}$ after $a$ (w.r.t. $\pi_{st}$) that is on a horizontal or vertical grid line $\ell_{a}$ (see Figure \ref{fig:cellSubPath}).
Let $v_a'$ be the vertex directly dominated by $v_a$. Analogously, we define $v_b$ as the last intersecting point on $\pi_{st}$ with a horizontal or vertical grid line $\ell_b$, lying before $b$. We denote by  $v_b' \in \ell_b$ the vertex that is directly dominated by $v_b$. Let $\pi_1,...,\pi_k$ be the sub curves between $v_a$ and $v_b$ on $\pi_{st}$, separated by grid lines.
Their concatenation is the  curve $\pi_{v_av_b}$ connecting $v_a$ and $v_b$ on $\pi_{st}$. We know that  $\pi_{ab} \subset \textit{W}$, so $\pi_i \subset \textit{W}$ for all $i = 1,...,k$.  
Let $\widetilde{\pi}_{v_a'v_b'} \subset \textit{W} \cap G$ be the concatenation of the paths obtained by
applying Lemma \ref{lem:subPath} on each $\pi_{i}$.
The two remaining sub-curves of $\pi_{st}$ (one before and one after $\pi_{v_av_b}$) are passing through obstacles. The parts in the cells enclosing $a$ and $b$ are exceptions. Let $w_{a}$ be the entry point of $\pi_{st}$ to the grid subcell  that contains $a$. 
Let $w_{a}'$ be its left bottom corner vertex. Since $s$ and $s'$ lie in the same grid cell, it follows:
\begin{equation}
\label{eq:ssprimelenght}
|ss'|_x \leq \frac{\delta}{n} \cdot \max \left\lbrace |T_1|, |T_2| \right\rbrace,
\end{equation}
where $|ss'|_x$ denotes the length of the projection of $ss'$ on the $x$-axis.
Furthermore, $w_a'$ is dominated by $w_a$ and the weighted shortest connecting $xy$-monotone path $\pi_{s w_a}$ between $s$ and $w_a$ lies completely inside $\textit{B}$. Thus, it follows that $|\widetilde{\pi}_{sw_a'}| \leq |\pi_{sw_a}|$. Combining this with (\ref{eq:ssprimelenght}), we obtain an $xy$-monotone weighted shortest connecting path $\widetilde{\pi}_{s'w_a'}$ from $s'$ to $w_a'$, such that $||\widetilde{\pi}_{s'w_a'}|| \leq ||\pi_{sw_a}|| +  2   \frac{\delta}{n} \cdot \max \left\lbrace |T_1|, |T_2| \right\rbrace$. We do not know how large the part of $\pi_{w_av_a}$  that passes through $\textit{B}$ or $\textit{W}$ is. Hence, we assume that $\pi_{w_av_a} \subset \textit{W}$. However $\pi_{w_av_a}$ is enclosed only by one grid cell (the one that contains $w_a'$ and $v_a'$). Thus, it follows that there exists an $xy$-monotone path $\widetilde{\pi}_{w_a' v_a'}$ (corresponding to $\pi_{w_av_a}$ and connecting $w_a'$ with $v_a'$), for which $||\widetilde{\pi}_{w_a'v_a'}|| \leq 2   \frac{\delta}{n} \cdot \max \left\lbrace |T_1|, |T_2| \right\rbrace$. In the same way, we construct the paths $\widetilde{\pi}_{v_b'w_b'}$ and $\widetilde{\pi}_{w_b't'}$, such that the concatenation $\widetilde{\pi}_{s't'}$ of all these four paths is not much longer compared to $||\pi_{st}||$, i.e., $||\widetilde{\pi}_{s't'}|| \leq ||\pi_{st}|| + 8   \frac{\delta}{n} \cdot \max \left\lbrace |T_1|, |T_2| \right\rbrace $.
\end{proof}

\begin{lemma}
\label{lem:appPathG}
	The graph $G$ has a complexity of $\mathcal{O} \left( \frac{n^4}{\delta^2} \right)$. If $\pi_{\sPoint \tPoint} \subset F$ (respectively, $\widetilde{\pi}_{\sPoint \tPoint} \subset G$) is a weighted shortest $xy$-monotone path in $F$ (respectively, $G$) then
			$||\pi_{\sPoint \tPoint}|| \leq ||\widetilde{\pi}_{\sPoint \tPoint}|| \leq ||\pi_{\sPoint \tPoint}|| + \delta \cdot \max \left\lbrace |T_1|, |T_2| \right\rbrace$.
\end{lemma}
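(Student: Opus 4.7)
The plan splits naturally into two parts: bounding the size of $G$, and establishing the two path-length inequalities separately. For the complexity, I would count lines: in each coordinate direction there are $\mathcal{O}(n/\delta)$ grid lines (the $n$ parameter-cell boundaries together with the $n/\delta$ additional equidistant ones). A single grid line sweeps through one row or column of parameter cells and therefore meets at most $2n$ ellipse arcs (at most two per cell, by convexity); each such intersection spawns one perpendicular intersection line. Hence the total number of lines per direction is $\mathcal{O}(n^2/\delta)$, and the arrangement of these $\mathcal{O}(n^2/\delta)$ horizontal and $\mathcal{O}(n^2/\delta)$ vertical lines together with the $n^2$ ellipse arcs has complexity $\mathcal{O}\bigl((n^2/\delta)^2 + (n^2/\delta)\, n^2\bigr) = \mathcal{O}(n^4/\delta^2)$. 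Since the vertex set of $G$ is exactly the vertex set of $A$ and the edges of $G$ are a subset of those of $A$, the bound $|G| = \mathcal{O}(n^4/\delta^2)$ follows.

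For the inequality $||\pi_{\sPoint \tPoint}|| \leq ||\widetilde{\pi}_{\sPoint \tPoint}||$, I would observe that every directed edge of $G$ is a straight segment of $F$ whose weight is, by construction, its $L_1$-length inside the forbidden space (and zero inside the free space). Consequently, the embedding of any $\widetilde{\pi} \subset G$ into $F$ is an $xy$-monotone path in $F$ of the same weighted length, and the inequality follows from the optimality of $\pi_{\sPoint \tPoint}$ in $F$. For the other direction, I would take an optimal $\pi_{\sPoint \tPoint} \subset F$ and decompose it along parameter-cell boundaries into subpaths $\pi_1, \ldots, \pi_k$ with consecutive endpoints $p_0 = \sPoint, p_1, \ldots, p_k = \tPoint$; since $\pi_{\sPoint \tPoint}$ is $xy$-monotone in the $n \times n$ grid of parameter cells, $k \leq 2n$. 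For each intermediate $p_i$, I would fix once and for all a vertex $p_i'$ of $G$ on the same parameter-cell boundary line as $p_i$ that is directly dominated by $p_i$, and set $p_0' = \sPoint$, $p_k' = \tPoint$ (both are already vertices of $G$). Applying Lemma \ref{lem:cellSubPath} to each $\pi_i$ yields an $xy$-monotone path $\widetilde{\pi}_i \subset G$ from $p_{i-1}'$ to $p_i'$ satisfying $||\widetilde{\pi}_i|| \leq ||\pi_i|| + 8 \frac{\delta}{n} \max\{|T_1|, |T_2|\}$. The concatenation of the $\widetilde{\pi}_i$ is a valid $xy$-monotone path in $G$ from $\sPoint$ to $\tPoint$, and summing the $\leq 2n$ error terms gives total additive error at most $16 \delta \max\{|T_1|, |T_2|\}$; since the shortest $G$-path $\widetilde{\pi}_{\sPoint \tPoint}$ is no longer than this concatenation, the claimed upper bound follows after absorbing the factor $16$ into the grid-density constant, which leaves the $\mathcal{O}(n^4/\delta^2)$ complexity unchanged.

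The main obstacle is the gluing step: Lemma \ref{lem:cellSubPath} delivers endpoints that are directly dominated by the given boundary points, but it does not on its own force the endpoint of $\widetilde{\pi}_i$ to agree with the start of $\widetilde{\pi}_{i+1}$. Committing to a single vertex $p_i'$ per boundary crossing before invoking the lemma resolves this. Relatedly, the decomposition must be along \emph{parameter} cells (giving $k \leq 2n$ summands), not along the finer grid cells (which would contribute $\mathcal{O}(n^2/\delta)$ summands and destroy the additive bound); Lemma \ref{lem:cellSubPath} is engineered precisely for this coarser decomposition, already absorbing the per-cell detour cost incurred by following the ellipse boundary with $xy$-monotone edges.
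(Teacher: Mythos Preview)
Your proposal is correct and follows essentially the same approach as the paper: decompose $\pi_{\sPoint\tPoint}$ along parameter-cell boundaries into at most $2n$ pieces, apply Lemma~\ref{lem:cellSubPath} to each, sum the per-cell errors, and absorb the resulting constant factor by rescaling $\delta$ (the paper writes this as ``$\delta := \delta/16$''). One minor inaccuracy: not every edge of $G$ is a straight segment, since the arrangement $A$ also contains arcs of the ellipse boundaries; this does not affect your lower-bound argument, however, as the embedded path still has the same weighted $L_1$-length.
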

\begin{proof}
        Denote the sequence of grid cells that $\pi_{\sPoint \tPoint}$ intersects from $\sPoint$ to $\tPoint$ by 
	 $c_1,...,c_k$.  Let  $\widetilde{\pi}_{\sPoint \tPoint}$ be the concatenation of the paths obtained by applying  Lemma \ref{lem:cellSubPath} with $\delta := \frac{\delta}{16}$ on each $c_i$. The error made in each cell is therefore upper bounded by 
      $\frac{\delta}{2n} \cdot \max \left\lbrace |T_1|, |T_2| \right\rbrace$. Since $\pi_{\sPoint \tPoint}$ passes through at most 
       $2 n$ cells, it follows that $||\pi_{\sPoint \tPoint}|| \leq ||\widetilde{\pi}_{\sPoint \tPoint}|| \leq ||\pi_{\sPoint \tPoint}|| + \delta \cdot \max \left\lbrace |T_1|, |T_2| \right\rbrace$.
      Each of the $\frac{16   n}{\delta}$ grid lines can intersect at most $2 n$ ellipses, thus requiring the addition of at most $\frac{32   n^2}{\delta}$ additional intersection lines. So, the arrangement of all these lines has a complexity of $\mathcal{O} \left( \frac{n^4}{\delta^2} \right)$.
\end{proof}

The above lemma shows that for the \ShortMP\ problem, 
$\widetilde{\pi}_{\sPoint \tPoint} \subset G$ approximates $\pi_{\sPoint \tPoint} \subset F$. 
Next, we show how to derive an approximate solution for the \MinEx\ problem, given $\widetilde{\pi}_{\sPoint \tPoint}$.
The path $\widetilde{\pi}_{\sPoint \tPoint}$ passes through a sequence of parameter cells. For each edge in $\widetilde{\pi}_{\sPoint \tPoint}$, find its embedding in $F$. Since each of these embedded edges are $xy$-monotone and the end vertex of one edge is the start vertex of the next one, this results in an $xy$-monotone path 
from $\sPoint$ to $\tPoint$ in $F$. 
Let $\pi_{\sPoint \tPoint}$ be the weighted shortest $xy$-monotone path connecting $\sPoint$ to $\tPoint$. From Lemma \ref{lem:appPathG} it follows that $||\pi_{\sPoint \tPoint}|| \leq ||\widetilde{\pi}_{\sPoint \tPoint}|| \leq ||\pi_{\sPoint \tPoint}|| + \delta \cdot \max \left\lbrace |T_1|, |T_2| \right\rbrace$. 
Since $\widetilde{\pi}_{\sPoint \tPoint}$ is a concatenation of segments, it can be directly transformed into two corresponding parameterizations $\tilde\alpha_1$ and $\tilde\alpha_2$. Since $||\pi_{\sPoint \tPoint}|| \leq ||\widetilde{\pi}_{\sPoint \tPoint}|| + \delta \cdot \max \left\lbrace |T_1|, |T_2| \right\rbrace$, the approximation quality of the solution $\left( \tilde\alpha_1, \tilde\alpha_2 \right)$ follows from Observation \ref{obs:transformation}. Let $\widetilde{\pi}_{ab}$ be a maximal subpath of $\widetilde{\pi}_{\sPoint \tPoint}$, passing through the forbidden-space and connecting the points $a$ and $b$, which lie both on the boundary of the free-space. These points correspond to two matchings $\left(a_1, a_2 \right)$ and $\left(b_1, b_2 \right)$ of two points lying on $T_1$ and $T_2$. Since both lie on the boundary of the free-space, it follows that $|a_1 a_2| = \varepsilon$ and $|b_1 b_2| = \varepsilon$. This implies that the Fr\'{e}chet distance between ${a_1 b_1}$  and ${a_2 b_2}$ is not greater than $\varepsilon$. We exchange the curve between $a_1$ and $b_1$ on $T_1$ by the segment ${a_1b_1}$ and we proceed analogously for the curve between $a_2$ and $b_2$ on $T_2$ using the segment ${a_2b_2}$. With these substitutions for all maximal subcurves of $\widetilde{\pi}_{\sPoint \tPoint}$, passing through the forbidden-space, we obtain two new polygonal curves $T_1'$ and $T_2'$, with $\delta_{F} \left( T_1',T_2' \right) \leq \varepsilon$. Observation \ref{obs:transformation} implies, that the sum of the lengths of the removed subcurves on $T_1$ and $T_2$ is equal to $\quality{\tilde{\alpha}}{B}$. So, the sum of the lengths of the substituted curves does not exceed $\Tquality{T}{B} + \delta \cdot \max \left\lbrace |T_1|, |T_2| \right\rbrace$. The running time follows directly from Lemma \ref{lem:appPathG}, by running the linear time algorithm for finding a shortest path in a directed acyclic graph.
We summarize our result for the \MinEx\ problem in the following theorem.

\begin{theorem}\label{thm:nonoptimalResult}
	Given two polygonal curves $T_1$ and $T_2$ in the plane, an arbitrary fixed $\varepsilon \ge 0$ and an approximation parameter $\delta >0$, we can compute in $\mathcal{O}( \frac{n^4}{\delta^2})$ time a pair of parameterizations $\left( \tilde\alpha_1 , \tilde\alpha_2 \right)$ and its quality $\quality{\tilde{\alpha}}{B}$ for $T_1$ and $T_2$, such that
			$\Tquality{T}{B} \leq \quality{\tilde{\alpha}}{B} \leq \Tquality{T}{B} + \delta \cdot \max \left\lbrace |T_1|, |T_2| \right\rbrace$.
Furthermore, we can construct two polygonal curves, $T_1'$ and $T_2'$, realizing $\quality{\tilde{\alpha}}{B}$, such that $\delta_{F} \left( T_1',T_2' \right) \leq \varepsilon$, if the distances between starting and ending points of $T_1$ and $T_2$ are not greater than $\varepsilon$.
\end{theorem}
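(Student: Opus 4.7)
My plan is to assemble the theorem directly from the machinery already developed: build $G$ as described in Steps 1--3, compute a weighted shortest path in $G$ using a standard DAG shortest-path routine, embed it geometrically in $F$, read off parameterizations, and finally patch $T_1$ and $T_2$ along the maximal forbidden-space subpaths. The quality bound will come from combining Lemma~\ref{lem:appPathG} with Observation~\ref{obs:transformation}; the running time bound is immediate since on a DAG with $\mathcal{O}(n^4/\delta^2)$ vertices and edges, topological-order relaxation runs in linear time.

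First I would verify the approximation inequality. Let $\pi_{\sPoint\tPoint}\subset F$ be an optimal $xy$-monotone weighted shortest path and $\widetilde{\pi}_{\sPoint\tPoint}\subset G$ the one produced by the algorithm. Lemma~\ref{lem:appPathG} yields $\|\pi_{\sPoint\tPoint}\|\le \|\widetilde{\pi}_{\sPoint\tPoint}\|\le \|\pi_{\sPoint\tPoint}\|+\delta\cdot\max\{|T_1|,|T_2|\}$ in the $L_1$-metric. Since every edge of $G$ is $xy$-monotone, concatenating their embeddings in $F$ gives an $xy$-monotone path from $\sPoint$ to $\tPoint$; this path is the graph of a strictly monotone surjection and therefore defines a pair of parameterizations $(\tilde\alpha_1,\tilde\alpha_2)$ of $T_1$ and $T_2$. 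Observation~\ref{obs:transformation} converts the $L_1$-length of $\widetilde{\pi}_{\sPoint\tPoint}$ inside the forbidden-space into $\quality{\tilde\alpha}{B}$, and similarly for $\Tquality{T}{B}$, giving the desired sandwich $\Tquality{T}{B}\le \quality{\tilde\alpha}{B}\le \Tquality{T}{B}+\delta\cdot\max\{|T_1|,|T_2|\}$.

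Next I would construct $T_1'$ and $T_2'$. Decompose $\widetilde{\pi}_{\sPoint\tPoint}$ into maximal subpaths $\widetilde{\pi}_{ab}$ lying in the forbidden-space. For each such subpath, its endpoints $a,b$ lie on the boundary of the free-space, so the corresponding matchings $(a_1,a_2)$ and $(b_1,b_2)$ on $T_1\times T_2$ satisfy $|a_1 a_2|=|b_1 b_2|=\varepsilon$. Replacing the subcurve of $T_1$ between $a_1$ and $b_1$ by the straight segment $\overline{a_1b_1}$, and the subcurve of $T_2$ between $a_2$ and $b_2$ by $\overline{a_2b_2}$, yields substitute segments that can be traversed in parallel with leash length $\varepsilon$ by the linear parameterization (the leash length along this uniform matching stays below $\max\{|a_1a_2|,|b_1b_2|\}=\varepsilon$ by convexity of the Euclidean disk of radius $\varepsilon$). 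Doing this for every maximal forbidden subpath produces $T_1', T_2'$; on the remaining portions, the matching induced by $(\tilde\alpha_1,\tilde\alpha_2)$ lies in the free space and so already uses leash length $\le\varepsilon$. Hence $\delta_F(T_1',T_2')\le\varepsilon$, and by Observation~\ref{obs:transformation} the total length removed equals $\quality{\tilde\alpha}{B}$.

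The main obstacle I anticipate is the treatment of the endpoints: if $\sPoint$ or $\tPoint$ itself lies in the forbidden-space, the initial (resp.\ final) portion of $\widetilde{\pi}_{\sPoint\tPoint}$ begins (resp.\ ends) in $\textit{B}$ rather than on $\partial\textit{W}$, so the ``boundary crossing points'' used in the substitution argument do not exist on one side. This is exactly where the hypothesis that the distances between the starting points and between the ending points of $T_1$ and $T_2$ are bounded by $\varepsilon$ is used: it lets us close off the initial and final forbidden subpaths by treating $\sPoint,\tPoint$ themselves as valid matchings with leash length $\le\varepsilon$, so that the straight-segment substitution still produces curves of Fr\'echet distance at most $\varepsilon$. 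With this taken care of, the running time $\mathcal{O}(n^4/\delta^2)$ follows from the complexity bound of $G$ in Lemma~\ref{lem:appPathG} together with linear-time DAG shortest path, and the proof is complete.
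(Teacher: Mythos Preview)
Your proposal is correct and follows essentially the same approach as the paper: invoke Lemma~\ref{lem:appPathG} for the approximation bound, embed the DAG shortest path as an $xy$-monotone path in $F$ to obtain $(\tilde\alpha_1,\tilde\alpha_2)$ via Observation~\ref{obs:transformation}, and replace maximal forbidden subpaths by straight segments to build $T_1',T_2'$. Your discussion of the endpoint hypothesis and the convexity justification for the leash bound on the substituted segments is slightly more explicit than the paper's, but the argument is the same.
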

By using the fact, that the length of an $xy$-monotone path in $F$ is equal to the sum of the lengths of subpaths going through free and forbidden space we derive the following about the \MaxIn\ problem.
\begin{cor}\label{cor:nonoptimalResult2}
	Given two polygonal curves $T_1$ and $T_2$ in the plane, an arbitrary fixed $\varepsilon \ge 0$ and an approximation parameter $\delta >0$, we can compute in $\mathcal{O}( \frac{n^4}{\delta^2})$ time a pair of parameterizations $\left( \tilde\alpha_1 , \tilde\alpha_2 \right)$ and its quality $\quality{\tilde{\alpha}}{W}$ for $T_1$ and $T_2$, such that
			$\Tquality{T}{W} - \delta \cdot \max \left\lbrace |T_1|, |T_2| \right\rbrace\leq \quality{\tilde{\alpha}}{W} \leq \Tquality{T}{W}$.
Furthermore, we can construct two polygonal curves, $T_1'$ and $T_2'$, realizing $\quality{\tilde{\alpha}}{W}$, such that $\delta_{F} \left( T_1',T_2' \right) \leq \varepsilon$, if the distances between starting and ending points of $T_1$ and $T_2$ are not greater than $\varepsilon$.
\end{cor}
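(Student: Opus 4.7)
The plan is to derive the Corollary from Theorem~\ref{thm:nonoptimalResult} via the elementary but useful invariant that the total $L_1$-length of an $xy$-monotone path from $\sPoint$ to $\tPoint$ in the deformed free-space diagram is a fixed constant, so that the ``forbidden portion'' and the ``free portion'' of any path sum to this constant. Concretely, since $\sPoint=(0,0)$ and $\tPoint=(|T_1|,|T_2|)$, any $xy$-monotone path from $\sPoint$ to $\tPoint$ has $L_1$-length exactly $|T_1|+|T_2|$.

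First, I would apply Observation~\ref{obs:transformation} to each parameterization $(\alpha_1,\alpha_2)$ and its corresponding path $\pi_{\sPoint\tPoint}\subset F$: the weighted length under $L_1$ of $\pi_{\sPoint\tPoint}$ when restricted to the forbidden space equals $\quality{\alpha}{B}$ and when restricted to the free space equals $\quality{\alpha}{W}$. Since the path is $xy$-monotone and its total $L_1$-length is $|T_1|+|T_2|$, this gives the identity
\begin{equation*}
\quality{\alpha}{B}+\quality{\alpha}{W}=|T_1|+|T_2|.
\end{equation*}
Taking infimum over all parameterizations on one side and supremum on the other immediately yields $\Tquality{T}{B}+\Tquality{T}{W}=|T_1|+|T_2|$, and moreover the same parameterization that minimizes $\quality{\cdot}{B}$ maximizes $\quality{\cdot}{W}$.

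Next, I would invoke Theorem~\ref{thm:nonoptimalResult} with the same input $T_1,T_2,\eps,\delta$ to obtain in $\mathcal{O}(n^4/\delta^2)$ time a pair $(\tilde\alpha_1,\tilde\alpha_2)$ satisfying $\Tquality{T}{B}\le\quality{\tilde\alpha}{B}\le\Tquality{T}{B}+\delta\cdot\max\{|T_1|,|T_2|\}$. Substituting $\quality{\tilde\alpha}{B}=|T_1|+|T_2|-\quality{\tilde\alpha}{W}$ and $\Tquality{T}{B}=|T_1|+|T_2|-\Tquality{T}{W}$ into this chain and flipping signs gives exactly
\begin{equation*}
\Tquality{T}{W}-\delta\cdot\max\{|T_1|,|T_2|\}\le\quality{\tilde\alpha}{W}\le\Tquality{T}{W},
\end{equation*}
which is the desired approximation guarantee. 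The running time is inherited directly from Theorem~\ref{thm:nonoptimalResult}.

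Finally, the construction of $T_1'$ and $T_2'$ with $\delta_F(T_1',T_2')\le\eps$ is identical to the construction in Theorem~\ref{thm:nonoptimalResult}: for every maximal subpath of the approximate $xy$-monotone path in $F$ that traverses the forbidden space between two boundary points of the free space, replace the corresponding subcurves of $T_1$ and $T_2$ by straight-line shortcuts between the matched endpoints (whose Euclidean distance is exactly $\eps$). This step is not an obstacle since it is essentially the same argument; the only content beyond Theorem~\ref{thm:nonoptimalResult} is the trivial conservation identity above, so no new geometric difficulty arises.
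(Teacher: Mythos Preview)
Your proposal is correct and follows essentially the same approach as the paper: both arguments rest on the conservation identity $\quality{\alpha}{B}+\quality{\alpha}{W}=|T_1|+|T_2|$ for every parameterization (equivalently, every $xy$-monotone $\sPoint\tPoint$-path has total $L_1$-length $|T_1|+|T_2|$), and then substitute into the \MinEx\ approximation bound. The only cosmetic difference is that the paper phrases the substitution at the level of path lengths $\pi_{\sPoint\tPoint}^B,\pi_{\sPoint\tPoint}^W$ and invokes Lemma~\ref{lem:appPathG} directly, whereas you invoke Theorem~\ref{thm:nonoptimalResult}; these are equivalent.
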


\begin{proof}

Let $\widetilde{\pi}_{\sPoint \tPoint} \subset F$ be the path reported by our algorithm between $\sPoint$ and $\tPoint$ for \ShortMP\ problem, and $\pi_{\sPoint \tPoint}$ be the true shortest weighted $xy$-monotone path.
Assume $\pi_{\sPoint \tPoint}^B$ ($\pi_{\sPoint \tPoint}^W$) is the sum of the unweighted length of subpaths of $\pi_{\sPoint \tPoint}$, going through the forbidden (free) space. Analogously we define $\widetilde{\pi}_{\sPoint \tPoint}^B$ and $\widetilde{\pi}_{\sPoint \tPoint}^W$ for the path $\widetilde{\pi}_{\sPoint \tPoint}$.
Since the optimum for \LongMP\ has the longest subpath in the free space, it follows that the quality of the solution for \MaxIn\ problem (provided by $\widetilde{\pi}_{\sPoint \tPoint}$) is $\quality{\tilde{\alpha}}{W} \leq \Tquality{T}{W}$.
Also following equations hold.

\begin{align}
\pi_{\sPoint \tPoint}^B + \pi_{\sPoint \tPoint}^W = |T_1| + |T_2| &\Leftrightarrow \pi_{\sPoint \tPoint}^B = |T_1| + |T_2| - \pi_{\sPoint \tPoint}^W\\
\widetilde{\pi}_{\sPoint \tPoint}^B + \widetilde{\pi}_{\sPoint \tPoint}^W = |T_1| + |T_2| &\Leftrightarrow \widetilde{\pi}_{\sPoint \tPoint}^B = |T_1| + |T_2| - \widetilde{\pi}_{\sPoint \tPoint}^W
\end{align}

where $|T_i|$ is the length of $T_i$, for $i = 1,2$. Lemma \ref{lem:appPathG} gives us $\widetilde{\pi}_{\sPoint \tPoint}^B \leq \pi_{\sPoint \tPoint}^B + \delta \cdot \max \{ |T_1|,|T_2| \}$.
\begin{alignat*}{4}
		 &\widetilde{\pi}_{\sPoint \tPoint}^B & \leq & \pi_{\sPoint \tPoint}^B + \delta \cdot \max \{ |T_1|,|T_2| \}\\
		 \stackrel{(10)}{\Rightarrow}& |T_1| + |T_2| - \widetilde{\pi}_{\sPoint \tPoint}^W & \leq & \pi_{\sPoint \tPoint}^B + \delta \cdot \max \{ |T_1|,|T_2| \}\\
		 \stackrel{(9)}{\Rightarrow}& |T_1| + |T_2| - \widetilde{\pi}_{\sPoint \tPoint}^W & \leq & |T_1| + |T_2| - \pi_{\sPoint \tPoint}^W + \delta \cdot \max \{ |T_1|,|T_2| \}\\
		 \Rightarrow & \pi_{\sPoint \tPoint}^W - \delta \cdot \max \{ |T_1|,|T_2| \} & \leq &  \widetilde{\pi}_{\sPoint \tPoint}^W.
\end{alignat*}
Since in the context of \MaxIn\ problem $\quality{\tilde{\alpha}}{W} = \widetilde{\pi}_{\sPoint \tPoint}^W$ and $ \Tquality{T}{W} = \pi_{\sPoint \tPoint}^W$, we get the claimed approximation accuracy. 

\end{proof}
\algrenewcommand{\algorithmicrequire}{\textbf{Input:}}
\algrenewcommand{\algorithmicensure}{\textbf{Output:}}
\section{Improvement}
\label{sec:improvement}
First, we present an abstract problem and then use it to improve the solutions to the  \MinEx\ and \MaxIn\ problems.  
Suppose $\Omega$ is a convex region and $P$ is a set of $n$ points on the boundary of $\Omega$, denoted $\partial \Omega$. 
Two points $p_1$ and $p_2$ are said to be $xy$-monotone if $p_1p_2$ is $xy$-monotone, in which case we write $p_1 \uparrow p_2$.
Our goal is to construct a directed graph $G'(P)$, where the vertices of $G'(P)$ are Steiner points together with the points in $P$ satisfying the following.
For each pair of points $p_i, p_j \in P$ such that  $p_i \uparrow p_j$, there exists a directed path in 
$G'$ between the vertex corresponding to $p_i$ to the vertex corresponding to $p_j$ and 
the cost of this path is $|p_ip_j|_1$, where $|p_ip_j|_1$ is the length of $p_ip_j$ in $L_1$-metric.
We can accomplish this task by examining $\binom{n}{2}$ pairs of points  from $P$.
However, we show that $G'(P)$ can be constructed with the aid of Steiner points
in $\mathcal{O}(n\log n)$ time and size. Our method is based on the following simple geometric observations \cite{Lee90}.

\begin{observation}
Let $a$ (respectively, $b$)  be a point in the South-West (respectively, North-East) quadrant of the Cartesian coordinate system. Then there exists an $L_1$-shortest path from $a$ to $b$ that passes through the origin.
\end{observation}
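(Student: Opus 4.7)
The plan is to verify this by a direct coordinate computation, exploiting the fact that in the $L_1$-metric, the distance between two points decomposes additively into $x$- and $y$-contributions, so that any path which is monotone in both coordinates is an $L_1$-shortest path.

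Write $a = (a_x, a_y)$ with $a_x, a_y \leq 0$ (so $a$ is in the South-West quadrant including its bordering axes) and $b = (b_x, b_y)$ with $b_x, b_y \geq 0$. First I would compute the direct $L_1$-distance:
\[
|ab|_1 = |b_x - a_x| + |b_y - a_y| = (b_x - a_x) + (b_y - a_y),
\]
where the removal of absolute values is justified by $a_x \leq 0 \leq b_x$ and $a_y \leq 0 \leq b_y$. Next, I would decompose this as $|ab|_1 = (-a_x - a_y) + (b_x + b_y) = |a\, O|_1 + |O\, b|_1$, where $O$ denotes the origin.

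Finally, I would exhibit an explicit path of this length: from $a$, travel along a staircase (or any $xy$-monotone curve) to $O$, then from $O$ along a similar staircase to $b$. The first leg is contained in the South-West quadrant, where moving in the $+x$ and $+y$ directions is monotone toward $O$, so its $L_1$-length equals $-a_x - a_y$; the second leg, symmetrically, has length $b_x + b_y$. The concatenation passes through the origin and realizes $|ab|_1$, hence is an $L_1$-shortest path. There is no real obstacle here; the only subtlety worth noting is that ``shortest $L_1$-path'' is not unique, so the claim is one of existence rather than of every shortest path passing through $O$.
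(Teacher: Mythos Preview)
Your proof is correct. The paper does not give its own proof of this observation; it simply states it as a well-known fact with a citation to the literature, so there is nothing to compare against beyond noting that your direct coordinate computation is exactly the standard justification.
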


\begin{observation}\label{obs:L1path}
Let $a$ and $b$ be two points such that $a\uparrow b$.
Any $xy$-monotone path from $a$ to $b$ lies inside the bounding box of $a$ and $b$. Furthermore, all $xy$-monotone paths from $a$ to $b$, have the same length in $L_1$-metric. 
\end{observation}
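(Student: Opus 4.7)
The plan is to handle the two claims separately; both follow almost immediately from the defining property of $xy$-monotonicity. I would parameterize an $xy$-monotone path as $\gamma:[0,1]\to\mathbb{R}^2$ with $\gamma(t)=(x(t),y(t))$, where both $x(t)$ and $y(t)$ are nondecreasing in $t$, $\gamma(0)=a=(a_x,a_y)$ and $\gamma(1)=b=(b_x,b_y)$. By hypothesis $a\uparrow b$, so $a_x\leq b_x$ and $a_y\leq b_y$, and the axis-aligned bounding box of $a$ and $b$ is $[a_x,b_x]\times[a_y,b_y]$.

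For the containment claim, I would simply apply monotonicity coordinate-wise: for every $t\in[0,1]$ we have $a_x=x(0)\leq x(t)\leq x(1)=b_x$ and, analogously, $a_y\leq y(t)\leq b_y$. This is exactly the assertion that $\gamma(t)$ lies in the bounding box of $a$ and $b$ throughout.

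For the equal-length claim, I would use the fact that the $L_1$-arclength of a rectifiable curve equals the total variation of $x(t)$ plus the total variation of $y(t)$. Because $x$ and $y$ are nondecreasing, those total variations collapse to the net displacements $x(1)-x(0)=b_x-a_x$ and $y(1)-y(0)=b_y-a_y$, respectively. Hence every $xy$-monotone path from $a$ to $b$ has $L_1$-length $(b_x-a_x)+(b_y-a_y)=|ab|_1$, independently of its detailed shape. For the polygonal paths relevant to this paper, the same conclusion follows directly: each segment contributes its horizontal extent plus its vertical extent to the $L_1$-length, and these two sums telescope across consecutive segments (since all intermediate $x$- and $y$-increments are nonnegative) to $b_x-a_x$ and $b_y-a_y$.

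There is no real obstacle in this observation; it is essentially a direct consequence of monotonicity together with the additivity of $L_1$-distance along axis-aligned directions. The only mild care needed is to restrict attention to rectifiable paths so that the notion of $L_1$-length is well defined, but this is automatic for the polygonal curves considered throughout the paper.
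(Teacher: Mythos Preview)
Your argument is correct. The paper itself offers no proof of this observation, treating it as immediate; your write-up simply supplies the straightforward details (coordinate-wise monotonicity for the bounding-box claim, and telescoping of the nonnegative $x$- and $y$-increments for the $L_1$-length claim) that the paper leaves implicit.
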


We compute $G'(P)$ as follows.
%
Initialize  the vertex set of $G'(P)$ to be  $P$. Then add the following Steiner points.
Compute the vertical median line $m$, splitting $P$ into at least $\lfloor \frac{|P|}{2} -1 \rfloor$ points to the left and at least $\lfloor \frac{|P|}{2} -1 \rfloor$ points to the right of $m$, respectively (Figure \ref{fig:generalMethod}a). 
Let $m_1$ be the upper and $m_2$ be the lower intersection points of $m \cap \partial \Omega$. 
Denote by $\ell_1$ and $\ell_2$ the horizontal lines containing $m_1$  and $m_2$, respectively.  
Add $m_1$ and $m_2$ as Steiner points  to the set of vertices of $G'(P)$. 
Partition  $P$ into three sets as follows. Let $P_{\textit{above}}$ be the set of points lying above $\ell_1$, $P_{\textit{below}}$ be the set of points lying below $\ell_2$ and let $P_{\textit{middle}}=P\setminus (P_{\textit{below}} \cup P_{\textit{above}})$.
For all the points $p_i \in P_{\textit{middle}}$, compute their orthogonal projection $p_i^m$ onto $m$. Add $p_i^m$ as a Steiner point. Moreover, add the edge $(p_i, p_i^m)$ to $G'(P)$, directed with respect to $xy$-monotone order. This projection onto $m$ implies an ordering among the points in $P_{\textit{middle}}$, with respect to increasing $y$-coordinate. Let $\langle p_1,...,p_k \rangle = P_{\textit{middle}}$  be this ordering.  For $i = 1,...,k-1$, add  the edges $\left( p_i^m,p_{i+1}^m \right)$ to $G'(P)$,  directed with respect to $xy$-monotone order. 
Also add the edges $\left( m_2,p_1^m \right)$ and $\left( p_k^m,m_1 \right)$. For all vertices $p_i \in P_{\textit{above}}$ (respectively, $p_j \in P_{\textit{below}}$), add the edge $\left( m_1,p_i \right)$ (respectively, $\left( p_j,m_2 \right)$). The weight of each edge is its length in $L_1$-metric. Let $P_{\textit{left}}$ (respectively, $P_{\textit{right}}$) be the set of elements of $P$ lying to the left (respectively, right) of $m$. Recursively apply the construction for $P_{\textit{left}}$ and $P_{\textit{right}}$. 

\begin{figure} [h]
	\centering
\begin{tabular}{ccc}
			\includegraphics[scale=.8]{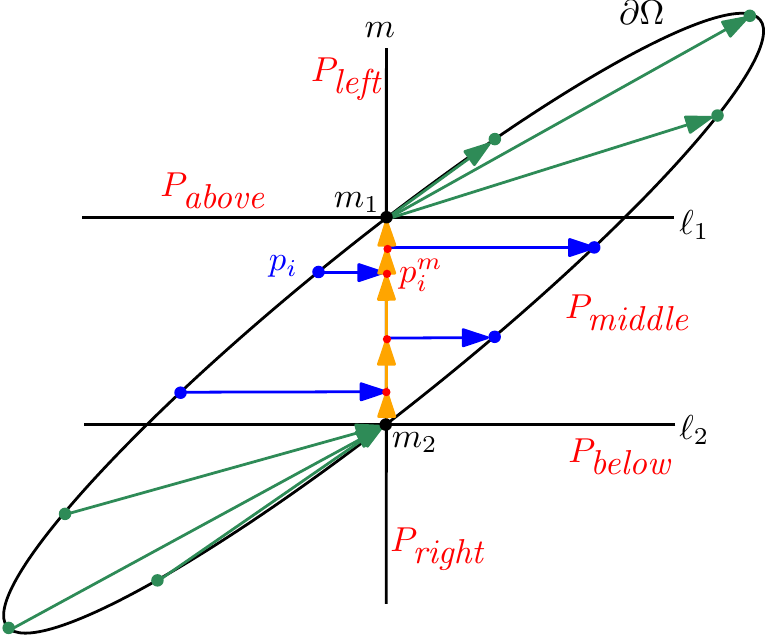}&&\includegraphics[scale=.8]{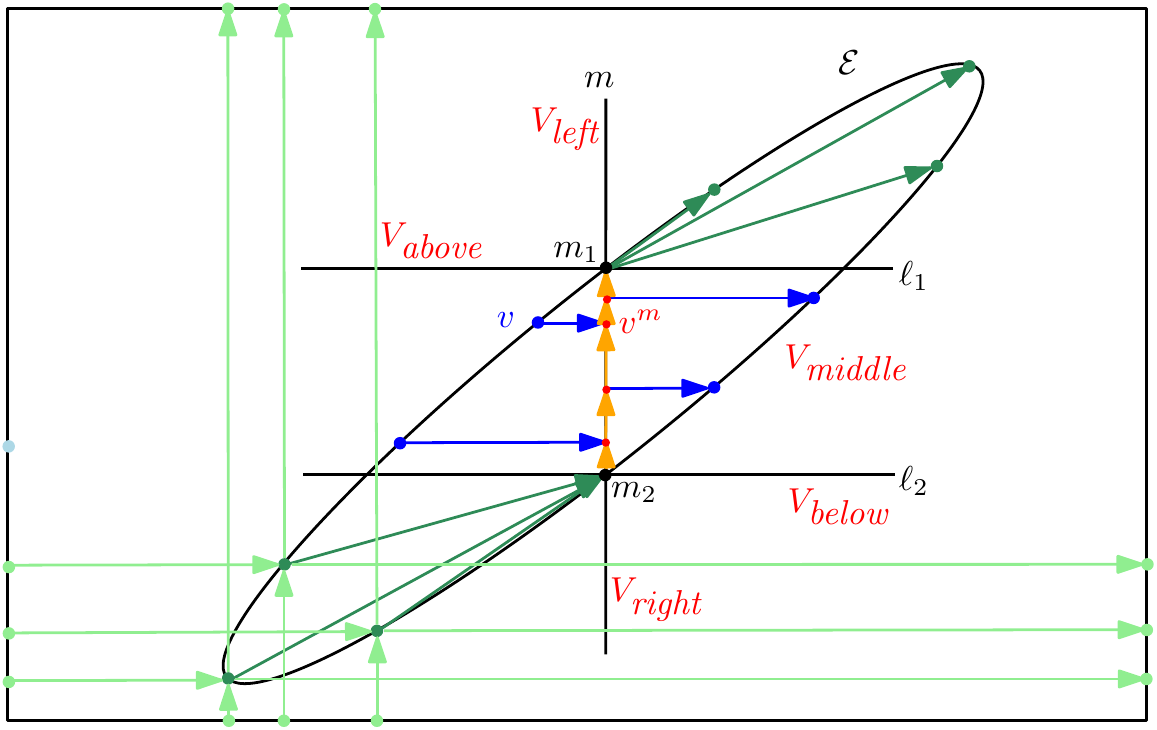}\\
			{(a)}&&{(b)}\\
		\end{tabular}
	\caption{(a) The point set $P$ is partitioned with respect to its median line $m$. Each $p_i \in P_{\textit{middle}}$ is projected onto $m$ (blue arrows and red disks). The projections are ordered with respect to their $y$-coordinates (orange arrows). Each $p_j \in P_{\textit{above}}$ (respectively, $p_j \in P_{\textit{below}}$) is connected to $m_1$ (respectively, $m_2$) (dark green arrows). (b) Each $v\in V_{\partial \mathcal{E}}$ is connected by directed $xy$-monotone edges (light green edges) to all four sides of $\partial C^{i,j}$.}
	\label{fig:generalMethod}
\end{figure}

\newcounter{improvment}
\setcounter{improvment}{\value{lemcounter}}
\begin{lemma}\label{lem:convexConnectivity}
	Let $\Omega$ be a convex region and $P$ be a set of $n$ points on its boundary. For each pair of points $p_i, p_j \in P$ such that $p_i \uparrow p_j$, there exists a path $\widetilde{\pi}_{p_i,p_j}$ in $G'(P)$ between the vertex corresponding to $p_i$ to the vertex corresponding to $p_j$, and its cost is $|p_ip_j|_1$. 
Furthermore, the complexity of $G'(P)$ is $\mathcal{O}(n\log{n})$.
\end{lemma}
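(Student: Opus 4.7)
The plan is to prove both halves of the lemma by induction on $|P|$, exploiting the divide-and-conquer structure of the construction, with a single convexity observation driving the case analysis.

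I first observe that every directed edge introduced by the construction is $xy$-monotone: the horizontal projection edges between $p_i$ and $p_i^m$ are oriented towards $m$ from the left side and away from $m$ on the right side; the vertical chain edges $(p_k^m,p_{k+1}^m)$ and the cap edges $(m_2,p_1^m)$ and $(p_k^m,m_1)$ point upward; and the edges between $m_1,m_2$ and $P_{\textit{above}},P_{\textit{below}}$ can be oriented $xy$-monotonically by the claim below. Consequently any directed path in $G'(P)$ is an $xy$-monotone polygonal curve, and Observation~\ref{obs:L1path} then gives that its $L_1$-length equals the $L_1$-distance between its endpoints. It therefore suffices to exhibit a directed path in $G'(P)$ from $p_i$ to $p_j$ whenever $p_i\uparrow p_j$; the length claim is automatic.

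The inductive step splits into two cases. If $p_i$ and $p_j$ lie on the same side of the median line $m$, then they both belong to $P_{\textit{left}}$ or both to $P_{\textit{right}}$, and the induction hypothesis applied to the corresponding recursive call supplies the required path. Otherwise $p_i$ is (weakly) left of $m$ and $p_j$ is (weakly) right. Here I use the following convexity claim: each of $P_{\textit{above}}$ and $P_{\textit{below}}$ lies entirely on a single side of $m$. If two points $p,q\in P_{\textit{above}}$ were on opposite sides of $m$, the segment $pq\subset\Omega$ would meet $m$ at a point $(x_m,y')$ with $y'>y(m_1)$, contradicting the choice of $m_1$ as the topmost intersection of $m$ with $\partial\Omega$; an analogous argument handles $P_{\textit{below}}$. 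Combining this claim with $p_i\uparrow p_j$ rules out $p_i\in P_{\textit{above}}$ and $p_j\in P_{\textit{below}}$, leaving exactly four configurations: $(p_i,p_j)\in P_{\textit{below}}\times P_{\textit{middle}}$, $P_{\textit{below}}\times P_{\textit{above}}$, $P_{\textit{middle}}\times P_{\textit{middle}}$, and $P_{\textit{middle}}\times P_{\textit{above}}$. In each configuration, concatenating (i) the edge from $p_i$ to $m_2$ or $p_i^m$, (ii) the appropriate portion of the vertical chain along $m$, and (iii) the edge from $m_1$ or $p_j^m$ to $p_j$ produces a directed path from $p_i$ to $p_j$ in $G'(P)$.

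For the size bound, the current recursive call contributes two Steiner vertices ($m_1,m_2$), one projection Steiner vertex per point in $P_{\textit{middle}}$, and $O(|P|)$ edges (one projection edge per middle point, the chain edges along $m$, and one edge per point of $P_{\textit{above}}\cup P_{\textit{below}}$ to $m_1$ or $m_2$). Since $m$ is a vertical median, both $|P_{\textit{left}}|$ and $|P_{\textit{right}}|$ are at most $\lceil|P|/2\rceil$, so the recurrence $T(n)=2T(n/2)+O(n)$ solves to $T(n)=O(n\log n)$, as claimed.

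The main obstacle is the convexity claim restricting the sides of $m$ that $P_{\textit{above}}$ and $P_{\textit{below}}$ may occupy; this is what reduces the median-crossing case to a handful of configurations. Without it, edges between $m_1$ (respectively $m_2$) and points of $P_{\textit{above}}$ (respectively $P_{\textit{below}}$) on the ``wrong'' side of $m$ could not be oriented $xy$-monotonically, jeopardising both the clean structure of $G'(P)$ and the length argument based on Observation~\ref{obs:L1path}.
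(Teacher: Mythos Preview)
Your proof is correct and follows the same divide-and-conquer idea as the paper, and in fact you supply more detail: the paper's proof only treats the case where both $p_i$ and $p_j$ land in $P_{\textit{middle}}$ at the separating level, whereas your convexity claim and four-case analysis handle $P_{\textit{above}}$ and $P_{\textit{below}}$ as well.

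One small over-claim worth noting: not \emph{every} edge of $G'(P)$ is $xy$-monotone. Your convexity claim shows $P_{\textit{above}}$ lies entirely on one side of $m$, but it may well be the left side (take $\Omega$ a disk and $m$ to the right of center); then the edges $(m_1,p)$ for $p\in P_{\textit{above}}$ go up-and-left and admit no $xy$-monotone orientation. This does not damage your argument, because your own case analysis already excludes $p_i\in P_{\textit{above}}$ and $p_j\in P_{\textit{below}}$ at the separating level, so in the four remaining configurations every edge of the path you actually build (either $(p_i,m_2)$ or $(p_i,p_i^m)$, a portion of the vertical chain, and either $(m_1,p_j)$ or $(p_j^m,p_j)$) is $xy$-monotone. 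Hence Observation~\ref{obs:L1path} applies to the constructed path directly, and the cost claim follows.
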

\begin{proof}
The points $p_i$ and $p_j$  are separated by a median line in one of the recursive calls. 
Since $p_i\uparrow p_j$, the projection of $p_i$ is below the projection of $p_j$ on that median line. 
Thus, they are connected by an $xy$-monotone path $\widetilde{\pi}_{p_i,p_j}$ in  $G'$.  The cost of this path follows from Observation \ref{obs:L1path}.
The recursion depth is at most $\mathcal{O} \left( \log{n} \right)$ as the problem is partitioned with respect to the median. 
In each call, a linear number of edges and Steiner points are added with respect to the size of the input. Thus $G'$ has $\mathcal{O} \left( n\log{n}\right)$ vertices and directed edges.
\end{proof}
	
\subsection{Improvement for the \MinEx\ and \MaxIn\ problems}
In Section \ref{sec:firstAlgorithm}, we showed how to find an approximate solution to \ShortMP\ in $\mathcal{O} \left(\frac {n^4}{\delta^2} \right)$ time. To do so, we defined a neighborhood graph $G=(V,E)$  of size $\mathcal{O} \left(\frac{n^4}{\delta^2} \right)$, computed a shortest path $\widetilde{\pi}$ in $G$ and proved that 
$\widetilde{\pi}$ approximates the solution to \ShortMP.  In this section,  we show how to compute an approximate solution from a  neighborhood graph $G^*$ of size $\mathcal{O} \left( \frac{n^3}{\delta} \log \left(\frac{n}{\delta} \right) \right)$. 
The graph $G^*$ is defined as follows. For each parameter cell $C^{i,j}$ with ellipse $\mathcal{E}$, we restrict $V$ to the boundaries of $C^{i,j}$ and $\mathcal{E}$. Formally,  
let $V_{\partial C^{i,j}}= \partial C^{i,j}\cap V$ and $V_{{\partial \mathcal{E}}}= \partial \mathcal{E}\cap V$.
The vertex set of $G^*$ for this cell is  $V_{\partial C^{i,j}} \cup V_{{\partial \mathcal{E}}}\cup V'$, where $V'$ is the set of vertices of 
$G'(V_{{\partial \mathcal{E}}})$ as defined above.
\newcounter{vertices-On-Boundary}
\setcounter{vertices-On-Boundary}{\value{lemcounter}}
\begin{lemma}\label{lem:verticesOnBoundary}
	The  size of $V_{\partial C^{i,j}} \cup V_{{\partial \mathcal{E}}}\cup V'$ is  $\mathcal{O} \left( \frac{n}{\delta} \log \left( \frac{n}{\delta} \right) \right)$.
\end{lemma}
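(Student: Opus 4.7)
My plan is to bound each of the three sets $V_{\partial C^{i,j}}$, $V_{\partial \mathcal{E}}$, and $V'$ separately, then sum the contributions. Because $V'$ is defined as the Steiner-point set of the graph $G'(V_{\partial\mathcal{E}})$ produced by the construction immediately preceding Lemma \ref{lem:convexConnectivity}, the whole argument reduces to proving that $|V_{\partial\mathcal{E}}|=\mathcal{O}(n/\delta)$ and then invoking that lemma.

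The key combinatorial observation is that, although the arrangement $A$ contains $\mathcal{O}(n^2/\delta)$ intersection lines in total, only $\mathcal{O}(n/\delta)$ of them can interact with any fixed parameter cell. First I would handle grid lines: of the $\mathcal{O}(n/\delta)$ global horizontal and vertical grid lines, each that crosses $C^{i,j}$ contributes at most two points to $\partial C^{i,j}$ and at most two points to $\partial\mathcal{E}$, yielding an $\mathcal{O}(n/\delta)$ contribution. Next I would treat the intersection lines. A horizontal intersection line is placed at the $y$-coordinate of some intersection of a vertical grid line with an ellipse $\mathcal{E}_{i',j'}$; its $y$-coordinate therefore lies in the $y$-range of row $j'$. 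Consequently, a horizontal intersection line can meet the left or right side of $C^{i,j}$ (both confined to the $y$-range of row $j$) only when $j'=j$. The total number of horizontal intersection lines with $j'=j$ is at most $2\sum_{i'}V_{i'}$, where $V_{i'}$ denotes the number of vertical grid lines inside column $i'$, and $\sum_{i'}V_{i'}$ equals the total number of vertical grid lines, which is $\mathcal{O}(n/\delta)$. A symmetric argument bounds the number of vertical intersection lines relevant to $C^{i,j}$ by $\mathcal{O}(n/\delta)$.

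Combining these counts yields $|V_{\partial C^{i,j}}|=\mathcal{O}(n/\delta)$ directly. For $|V_{\partial\mathcal{E}}|$, I would use that each of the $\mathcal{O}(n/\delta)$ lines crossing $C^{i,j}$ meets the convex curve $\partial\mathcal{E}$ in at most two points, giving $|V_{\partial\mathcal{E}}|=\mathcal{O}(n/\delta)$ as well. Applying Lemma \ref{lem:convexConnectivity} to the point set $V_{\partial\mathcal{E}}$ on the boundary of the convex region enclosed by $\mathcal{E}$ produces $|V'|=\mathcal{O}\bigl((n/\delta)\log(n/\delta)\bigr)$. Summing the three bounds yields the claim.

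I do not foresee a conceptual obstacle; the main care point is the accounting of horizontal versus vertical intersection lines, together with the fact that, while individual cells may contain very different numbers of grid lines, the total number of grid lines summed across all cells in any fixed row or column is still $\mathcal{O}(n/\delta)$, which is what drives the per-cell bound.
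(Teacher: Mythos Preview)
Your proposal is correct and follows the same overall strategy as the paper: bound $|V_{\partial C^{i,j}}|$ and $|V_{\partial\mathcal{E}}|$ by $\mathcal{O}(n/\delta)$ and then invoke Lemma~\ref{lem:convexConnectivity} to obtain $|V'|=\mathcal{O}\bigl((n/\delta)\log(n/\delta)\bigr)$. Your treatment is in fact more careful than the paper's: the paper simply asserts that the worst-case vertex count on $\partial C^{i,j}\cup\partial\mathcal{E}$ is $4n/\delta$ without explicitly accounting for intersection lines generated by ellipses in \emph{other} cells of the same row or column, whereas your row/column accounting argument (that horizontal intersection lines relevant to $C^{i,j}$ must originate in row $j$, and that there are at most $2\sum_{i'}V_{i'}=\mathcal{O}(n/\delta)$ of them) fills exactly that gap.
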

\begin{proof}
Based on the construction of $G$, in Section \ref{sec:firstAlgorithm}, there are $\frac{n}{\delta}$ equidistant vertical and horizontal grid lines. In the worst case all of them are intersecting $\partial C^{i,j}$ and $\partial \mathcal{E}$.
Also for each intersection point on $\partial \mathcal{E}$ there exists an intersection line. So, in the worst case, the total number of vertices of $G$ on $\partial C^{i,j}$ and $\partial \mathcal{E}$ (i.e., $V_{\partial C^{i,j}} \cup V_{{\partial \mathcal{E}}}$) is $\frac{4n}{\delta}=\mathcal{O}\left( \frac{n}{\delta} \right)$.
Moreover, since $V'$ is the set of vertices of $G'(V_{{\partial \mathcal{E}}})$, based on Lemma \ref{lem:convexConnectivity} the complexity of $G'(V_{{\partial \mathcal{E}}})$ is $\mathcal{O} \left( \frac{n}{\delta} \log \left( \frac{n}{\delta} \right) \right)$.
So, the  size of $V_{\partial C^{i,j}} \cup V_{{\partial \mathcal{E}}}\cup V'$ is  $\mathcal{O}\left( \frac{n}{\delta} \right)+\mathcal{O} \left( \frac{n}{\delta} \log \left( \frac{n}{\delta} \right) \right)=\mathcal{O} \left( \frac{n}{\delta} \log \left( \frac{n}{\delta} \right) \right)$.
\end{proof}
For each parameter cell $C^{i,j}$ with ellipse $\mathcal{E}$, we define the set of edges of $G^*$ as follows.
For each  $v\in V_{\partial \mathcal{E}}$, there is an edge between $v$ and each of its projections on $\partial C^{i,j}$. 
Each of these edges are directed with respect to the $xy$-monotone order and weighted by the length of their intersection with the forbidden space (Figure \ref{fig:generalMethod}b). Moreover, each edge in $G'(V_{\partial\mathcal{E}})$  is an edge in this cell with weight zero.
Finally, $G^*$ is the union of all the graphs defined for  each cell.
\newcounter{GPrime}
\setcounter{GPrime}{\value{lemcounter}}
\begin{lemma}\label{lem:GPrimeCompl}
	The size of  $G^*$ is $\mathcal{O} \left( \frac{n^3}{\delta}    \log \left(\frac{n}{\delta} \right) \right)$.
\end{lemma}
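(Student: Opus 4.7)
The plan is to bound the size of $G^*$ by summing the per-cell contributions over all parameter cells and showing that each cell contributes only a polylogarithmic factor beyond $\frac{n}{\delta}$.

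First, I would recall that the deformed free-space diagram $F$ consists of $\mathcal{O}(n^2)$ parameter cells $C^{i,j}$, and that $G^*$ is defined as the union of local subgraphs, one per cell. So it suffices to bound the vertex count and the edge count inside a single cell and then multiply by $\mathcal{O}(n^2)$.

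For the vertex count per cell, I would directly invoke Lemma \ref{lem:verticesOnBoundary}, which already gives $|V_{\partial C^{i,j}} \cup V_{\partial \mathcal{E}} \cup V'| = \mathcal{O}\!\left(\tfrac{n}{\delta} \log \tfrac{n}{\delta}\right)$. Summed over $\mathcal{O}(n^2)$ cells, this yields the stated vertex bound $\mathcal{O}\!\left(\tfrac{n^3}{\delta} \log \tfrac{n}{\delta}\right)$.

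For the edge count per cell, I would split the edges into two groups. The first group consists of the projection edges connecting each $v \in V_{\partial \mathcal{E}}$ to the (at most four) sides of $\partial C^{i,j}$; since $|V_{\partial \mathcal{E}}| = \mathcal{O}(n/\delta)$ and each such vertex produces $\mathcal{O}(1)$ projection edges, this group contributes $\mathcal{O}(n/\delta)$ edges. The second group consists of the edges of $G'(V_{\partial \mathcal{E}})$, which by Lemma \ref{lem:convexConnectivity} has size $\mathcal{O}\!\left(\tfrac{n}{\delta}\log \tfrac{n}{\delta}\right)$. So each cell contributes $\mathcal{O}\!\left(\tfrac{n}{\delta}\log \tfrac{n}{\delta}\right)$ edges.

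Multiplying by the $\mathcal{O}(n^2)$ cells gives the claimed bound $\mathcal{O}\!\left(\tfrac{n^3}{\delta} \log \tfrac{n}{\delta}\right)$ on the total size of $G^*$. There is no real obstacle here: the work has been done in Lemmas \ref{lem:verticesOnBoundary} and \ref{lem:convexConnectivity}. The only point to be careful about is confirming that the number of cell-boundary projections per vertex of $V_{\partial \mathcal{E}}$ is $\mathcal{O}(1)$, which follows because $C^{i,j}$ is a rectangle with four sides.
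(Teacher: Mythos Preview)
Your proof is correct and follows essentially the same approach as the paper's: bound the per-cell contribution using Lemmas \ref{lem:verticesOnBoundary} and \ref{lem:convexConnectivity}, then multiply by the $\mathcal{O}(n^2)$ parameter cells. If anything, your version is slightly more explicit in separating the vertex and edge counts and in noting that each $v\in V_{\partial\mathcal{E}}$ yields only $\mathcal{O}(1)$ projection edges.
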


\noindent We now explain how to compute an approximate solution to \ShortMP\ by using $G^*$ as a neighborhood graph.
\begin{lemma}\label{lem:GPrimeShorterThanG}
	Let $\widetilde{\pi}_{s't'}\subset G$ be a weighted shortest path connecting two points on the boundary of $C^{i,j}$. Then, there exists a path $\widetilde{\pi}_{s't'}^*$ in $G^*$, such that $||\widetilde{\pi}_{s't'}^*|| \leq ||\widetilde{\pi}_{s't'}||$.
\end{lemma}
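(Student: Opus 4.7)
The plan is to exploit the convexity of the free-space region $\mathcal{E}$ together with the $xy$-monotonicity of $\widetilde{\pi}_{s't'}$ in order to decompose the $G$-path into at most three pieces and rebuild each one inside $G^*$. Because the intersection of an $xy$-monotone curve with a convex region is connected, $\widetilde{\pi}_{s't'}$ enters the free space at most once, at some point $a$, and exits at most once, at some point $b$; since $\widetilde{\pi}_{s't'}$ is a $G$-path, both $a$ and $b$ are arrangement vertices and hence lie in $V_{\partial \mathcal{E}}$. Splitting at $a$ and $b$ yields three subpaths $\pi_1$ (from $s'$ to $a$), $\pi_2$ (from $a$ to $b$), and $\pi_3$ (from $b$ to $t'$), with $\pi_1,\pi_3\subset \textit{B}$ and $\pi_2\subset \textit{W}$. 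Consequently $||\widetilde{\pi}_{s't'}||=|s'a|_1+|bt'|_1$, since $\pi_2$ contributes $0$. (If $\widetilde{\pi}_{s't'}$ avoids $\mathcal{E}$ entirely, the argument below specializes to treating the whole path as $\pi_1$.)

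For the middle piece I would apply Lemma~\ref{lem:convexConnectivity} to the set $V_{\partial \mathcal{E}}$: since $a\uparrow b$ and both points lie in $V_{\partial \mathcal{E}}$, the subgraph $G'(V_{\partial \mathcal{E}})\subseteq G^*$ contains an $xy$-monotone path $\widetilde{\pi}^*_{ab}$ from $a$ to $b$. Because every edge of $G'(V_{\partial \mathcal{E}})$ is declared to have weight zero inside $G^*$, we obtain $||\widetilde{\pi}^*_{ab}||=0$, matching the weight of $\pi_2$ exactly.

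For the two outer pieces (the case of $\pi_3$ being symmetric), I would rebuild the forbidden-space route from $s'$ to $a$ as follows. Let $a'\in\partial C^{i,j}$ be the orthogonal projection of $a$ onto the side of $\partial C^{i,j}$ containing $s'$; by the definition of $G^*$ the $xy$-monotone edge connecting $a'$ and $a$ is present, and $s'\uparrow a$ forces $s'\uparrow a'$, so traversing $\partial C^{i,j}$ from $s'$ to $a'$ and then following the projection edge is an $xy$-monotone route inside $G^*$. By Observation~\ref{obs:L1path}, every $xy$-monotone route between two fixed points has the same $L_1$-length, so this route has $L_1$-length $|s'a|_1$. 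Since each $G^*$-edge is weighted by its $L_1$-length restricted to the forbidden space, the weighted length of the constructed route is at most $|s'a|_1=||\pi_1||$. Concatenating this with the middle piece and the symmetric route from $b$ to $t'$ of weight at most $||\pi_3||$ produces the desired $\widetilde{\pi}^*_{s't'}$ in $G^*$ with $||\widetilde{\pi}^*_{s't'}||\leq||\widetilde{\pi}_{s't'}||$.

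The main obstacle is the forbidden-space portions, because $G^*$ is much sparser than $G$ and it is not immediately obvious that the reroute of $\pi_1$ actually decomposes into edges available in $G^*$. The leverage that makes the argument go through is Observation~\ref{obs:L1path}: it forces every $xy$-monotone reroute to have the same $L_1$-length $|s'a|_1$, and since only the forbidden-space portion of an edge contributes to its weight, any stretch of the reroute that happens to dip into the free space can only decrease the weighted length, never increase it. The subtle point still to verify is that the boundary traversal from $s'$ to $a'$ decomposes into $xy$-monotone edges available in $G^*$, which follows from the subdivision of $\partial C^{i,j}$ by the grid-line and projection-line intersections already used to bound $|V_{\partial C^{i,j}}|$ in Lemma~\ref{lem:verticesOnBoundary}.
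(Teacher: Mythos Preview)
Your proposal is correct and follows essentially the same approach as the paper's proof: decompose $\widetilde{\pi}_{s't'}$ at the (at most one) entry point $a$ and exit point $b$ of $\mathcal{E}$, replace the middle free-space piece by a zero-weight path from $G'(V_{\partial\mathcal{E}})$ via Lemma~\ref{lem:convexConnectivity}, and replace each forbidden-space piece by a projection edge together with a boundary walk, invoking Observation~\ref{obs:L1path} to bound the weight. The paper's proof is terser (it does not explicitly cite Observation~\ref{obs:L1path} and simply asserts that the projection route exists and has weight at most $||\widetilde{\pi}_{s'a}||$), but the underlying argument is the same; your identification of the boundary-edge availability as the residual subtle point is apt, since the paper's proof glosses over it as well.
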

\begin{proof}
	Let $\mathcal{E}$ be the ellipse corresponding to $C^{i,j}$. Since $\widetilde{\pi}_{s't'}$ corresponds to an $xy$-monotone path, there exists at most one vertex $a$ (respectively, $b$) where $\widetilde{\pi}_{st}$ enters (respectively, exits) $\mathcal{E}$. 
So, $\widetilde{\pi}_{s't'}$ consists of three sub-paths $\widetilde{\pi}_{s'a} \subset B$, $\widetilde{\pi}_{ab} \subset W$ and $\widetilde{\pi}_{bt'} \subset B$, whose concatenation is $\widetilde{\pi}_{s't'}$. Here, $a$ is defined as the last (respectively, first) vertex of $\widetilde{\pi}_{s'a}$ (respectively, $\widetilde{\pi}_{ab}$) and $b$ as the last (respectively, first) vertex of $\widetilde{\pi}_{ab}$ (respectively, $\widetilde{\pi}_{bt'}$).
Since $\widetilde{\pi}_{s'a}$ is $xy$-monotone, it follows that $a$ dominates $s'$. Moreover,  $a$ dominates the orthogonal projection of $a$ onto  the grid line on which $s'$ lies. This implies that there exists a path $\widetilde{\pi}_{s'a}^*$ connecting $s'$ and $a$. Since $\widetilde{\pi}_{s'a} \subset B$ and both paths have the same ending points, it follows $||\widetilde{\pi}_{s'a}^*|| \le ||\widetilde{\pi}_{s'a}||$.
Since $\widetilde{\pi}_{ab}$ is $xy$-monotone, we have $a \uparrow b$. It follows from Lemma \ref{lem:convexConnectivity} that there exists a path $\widetilde{\pi}_{ab}^* \subset \textit{W}$. This implies that $||\widetilde{\pi}_{ab}^*|| = 0$ (because the weight in free-space is zero). Analogously, to the construction of $\widetilde{\pi}_{s'a}^*$ it follows, that there exists a path in $G^*$, such that $||\widetilde{\pi}_{bt'}^*|| \leq ||\widetilde{\pi}_{bt'}||$. The concatenation of the three constructed paths results in the required path.
\end{proof}

The following theorem is an improvement over Theorem \ref{thm:nonoptimalResult}. The proof is the same except that  for each parameter cell,  we apply Lemma \ref{lem:GPrimeShorterThanG} instead of Lemma \ref{lem:appPathG}. 
\begin{theorem}\label{thm:finalResult}
	Given two polygonal curves $T_1$ and $T_2$ in the plane, an arbitrary fixed $\varepsilon > 0$ and an approximation parameter $\delta >0$, we can compute in $\mathcal{O} \left( \frac{n^3}{\delta}   \log \left( \frac{n}{\delta} \right) \right)$ time a pair of parameterizations $\left( \tilde\alpha_1 , \tilde\alpha_2 \right)$ and its quality $\quality{\tilde{\alpha}}{B}$ for $T_1$ and $T_2$, such that
			$\Tquality{T}{B} \leq \quality{\tilde{\alpha}}{B} \leq \Tquality{T}{B} + \delta \cdot \max \left\lbrace |T_1|, |T_2| \right\rbrace$.
Furthermore, we can construct two polygonal curves, $T_1'$ and $T_2'$, realizing $\quality{\tilde{\alpha}}{B}$, such that $\delta_{F} \left( T_1',T_2' \right) \leq \varepsilon$, if the distances between starting and ending points of $T_1$ and $T_2$ are not greater than $\varepsilon$.
\end{theorem}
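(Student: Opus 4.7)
The plan is to mirror the proof of Theorem \ref{thm:nonoptimalResult} verbatim, but substitute the neighborhood graph $G^*$ for $G$ everywhere, relying on Lemmas \ref{lem:GPrimeCompl} and \ref{lem:GPrimeShorterThanG} at the two places where the bound on $G$'s size and the approximation quality inside a parameter cell were invoked. Concretely, first I would construct the deformed free-space diagram $F$ (Step 1 of Section \ref{sec:firstAlgorithm}), overlay the same $\frac{16n}{\delta}$ equidistant grid lines and the induced intersection lines as before (so that $V_{\partial C^{i,j}}$ and $V_{\partial\mathcal{E}}$ inherit their ordering and $L_1$-length bookkeeping), and then build $G^*$ cell by cell: for every parameter cell $C^{i,j}$ with ellipse $\mathcal{E}$, take the Steiner construction $G'(V_{\partial\mathcal{E}})$ of Lemma \ref{lem:convexConnectivity} as the zero-weight backbone inside $\mathcal{E}$, and attach each vertex $v \in V_{\partial\mathcal{E}}$ to its $xy$-monotone projections on the four sides of $\partial C^{i,j}$ with edges weighted by the length of their intersection with the forbidden space.

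Next I would compute a weighted shortest $xy$-monotone path $\widetilde{\pi}_{\sPoint\tPoint}^* \subset G^*$ from $\sPoint$ to $\tPoint$. Since $G^*$ is a DAG (every edge is $xy$-monotone), this takes time linear in $|G^*|$, which by Lemma \ref{lem:GPrimeCompl} is $\mathcal{O}\!\left(\frac{n^3}{\delta}\log\!\frac{n}{\delta}\right)$. To certify the additive error, let $\widetilde{\pi}_{\sPoint\tPoint}\subset G$ be a shortest path in the larger graph $G$ from Section \ref{sec:firstAlgorithm}. Decomposing $\widetilde{\pi}_{\sPoint\tPoint}$ into its maximal sub-paths between consecutive intersections with parameter-cell boundaries and applying Lemma \ref{lem:GPrimeShorterThanG} to each piece, the concatenation yields a path in $G^*$ whose weight is at most $\|\widetilde{\pi}_{\sPoint\tPoint}\|$, so $\|\widetilde{\pi}_{\sPoint\tPoint}^*\| \le \|\widetilde{\pi}_{\sPoint\tPoint}\|$. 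Combined with Lemma \ref{lem:appPathG}, this gives
\[
\|\pi_{\sPoint\tPoint}\| \;\le\; \|\widetilde{\pi}_{\sPoint\tPoint}^*\| \;\le\; \|\pi_{\sPoint\tPoint}\| + \delta\cdot\max\{|T_1|,|T_2|\},
\]
where $\pi_{\sPoint\tPoint}$ is a true weighted shortest $xy$-monotone path in $F$; the lower bound follows because $G^*\subset F$ preserves $L_1$-weights.

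Finally I would convert $\widetilde{\pi}_{\sPoint\tPoint}^*$ into the claimed output exactly as in the proof of Theorem \ref{thm:nonoptimalResult}: reading off the $xy$-monotone sequence of segments produces a pair of parameterizations $(\tilde\alpha_1,\tilde\alpha_2)$ for which, by Observation \ref{obs:transformation}, $\quality{\tilde\alpha}{B} = \|\widetilde{\pi}_{\sPoint\tPoint}^*\|_B$, and the two previous inequalities together with Observation \ref{obs:transformation} applied to $\pi_{\sPoint\tPoint}$ deliver $\Tquality{T}{B} \le \quality{\tilde\alpha}{B} \le \Tquality{T}{B} + \delta\cdot\max\{|T_1|,|T_2|\}$. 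For each maximal sub-path of $\widetilde{\pi}_{\sPoint\tPoint}^*$ lying in the forbidden space, its endpoints $a,b$ on $\partial W$ correspond to matchings $(a_1,a_2)$, $(b_1,b_2)$ with $|a_1a_2|=|b_1b_2|=\varepsilon$, so replacing the subcurve of $T_i$ between $a_i$ and $b_i$ by the segment $a_ib_i$ for $i=1,2$ yields $T_1',T_2'$ with $\delta_F(T_1',T_2')\le\varepsilon$, realizing $\quality{\tilde\alpha}{B}$.

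The only non-routine ingredient is the bookkeeping needed to ensure that the piecewise application of Lemma \ref{lem:GPrimeShorterThanG} concatenates into a single $xy$-monotone path in $G^*$, i.e., that the endpoint at which one sub-path leaves a cell is identified with the endpoint at which the next sub-path enters the adjacent cell; this is immediate because the cell-boundary vertices $V_{\partial C^{i,j}}$ are shared across neighbouring cells in the overlaid arrangement, so no additional transition vertices are lost. Everything else is a direct transcription of the arguments preceding Theorem \ref{thm:nonoptimalResult}, with the cell-local approximation bound upgraded via the sparser graph of Section \ref{sec:improvement}.
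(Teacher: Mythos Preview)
Your proposal is correct and follows essentially the same approach as the paper, which simply states that the proof of Theorem~\ref{thm:nonoptimalResult} carries over once Lemma~\ref{lem:GPrimeShorterThanG} is applied per parameter cell in place of (more precisely, on top of) the per-cell bound underlying Lemma~\ref{lem:appPathG}. Your explicit two-step chain $\|\pi_{\sPoint\tPoint}\|\le\|\widetilde{\pi}_{\sPoint\tPoint}^*\|\le\|\widetilde{\pi}_{\sPoint\tPoint}\|\le\|\pi_{\sPoint\tPoint}\|+\delta\cdot\max\{|T_1|,|T_2|\}$ and the remark that shared cell-boundary vertices guarantee concatenation are exactly the details the paper leaves implicit.
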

\begin{cor}\label{cor:finalResult}
Given two polygonal curves $T_1$ and $T_2$ in the plane, an arbitrary fixed $\varepsilon > 0$ and an approximation parameter $\delta >0$, we can compute in $\mathcal{O} \left( \frac{n^3}{\delta}   \log \left( \frac{n}{\delta} \right) \right)$ time a pair of parameterizations $\left( \tilde\alpha_1 , \tilde\alpha_2 \right)$ and its quality $\quality{\tilde{\alpha}}{W}$ for $T_1$ and $T_2$, such that
			$\Tquality{T}{W} - \delta \cdot \max \left\lbrace |T_1|, |T_2| \right\rbrace\leq \quality{\tilde{\alpha}}{W} \leq \Tquality{T}{W}$.
Furthermore, we can construct two polygonal curves, $T_1'$ and $T_2'$, realizing $\quality{\tilde{\alpha}}{W}$, such that $\delta_{F} \left( T_1',T_2' \right) \leq \varepsilon$, if the distances between starting and ending points of $T_1$ and $T_2$ are not greater than $\varepsilon$.
\end{cor}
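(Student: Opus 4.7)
The plan is to reduce to Theorem \ref{thm:finalResult} by the same complementarity argument that underlies Corollary \ref{cor:nonoptimalResult2}. The essential observation is that any $xy$-monotone path from $\sPoint$ to $\tPoint$ in the deformed free-space diagram $F$ has total $L_1$-length exactly $|T_1|+|T_2|$, because its rightward extent spans the full length of $T_1$ and its upward extent spans the full length of $T_2$. Decomposing this length into the forbidden-space portion $\pi^B$ and the free-space portion $\pi^W$ of the path gives the pointwise identity $\pi^B + \pi^W = |T_1|+|T_2|$; combining this with Observation \ref{obs:transformation} and taking suprema/infima over all parameterizations yields $\Tquality{T}{B} + \Tquality{T}{W} = |T_1|+|T_2|$, so that the \MinEx\ and \MaxIn\ optima are attained by the same (possibly limiting) parameterizations.

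Given this setup, I would first run the algorithm of Theorem \ref{thm:finalResult} to compute, in $\mathcal{O}\!\left(\frac{n^3}{\delta}\log\frac{n}{\delta}\right)$ time, a pair of parameterizations $(\tilde\alpha_1,\tilde\alpha_2)$ satisfying
\[
\quality{\tilde\alpha}{B} \;\leq\; \Tquality{T}{B} + \delta \cdot \max\{|T_1|,|T_2|\}.
\]
Subtracting both sides from $|T_1|+|T_2|$ and invoking the pointwise identity $\quality{\tilde\alpha}{W} = |T_1|+|T_2| - \quality{\tilde\alpha}{B}$ together with $\Tquality{T}{W} = |T_1|+|T_2| - \Tquality{T}{B}$ immediately gives
\[
\quality{\tilde\alpha}{W} \;\geq\; \Tquality{T}{W} - \delta \cdot \max\{|T_1|,|T_2|\},
\]
while the upper bound $\quality{\tilde\alpha}{W} \leq \Tquality{T}{W}$ is immediate from the supremum definition of $\Tquality{T}{W}$.

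The construction of $T_1'$ and $T_2'$ with $\delta_F(T_1',T_2') \leq \varepsilon$ is inherited verbatim from Theorem \ref{thm:finalResult}: the same approximation path in $G^*$ produces the same pair $(\tilde\alpha_1,\tilde\alpha_2)$, and each maximal forbidden-space subpath is replaced by straight-line matchings between its free-space boundary endpoints. The running time is exactly that of Theorem \ref{thm:finalResult}, since no additional work beyond the \MinEx\ computation is required. There is no substantive obstacle here beyond the complementarity bookkeeping; the argument is entirely parallel to that of Corollary \ref{cor:nonoptimalResult2}, with Theorem \ref{thm:finalResult} playing the role previously played by Theorem \ref{thm:nonoptimalResult}.
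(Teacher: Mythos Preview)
Your proposal is correct and mirrors the paper's approach: the paper does not write out a separate proof for this corollary, but it is understood to follow from Theorem~\ref{thm:finalResult} by the same complementarity argument given for Corollary~\ref{cor:nonoptimalResult2}, exactly as you describe. Your explicit statement of the identity $\Tquality{T}{B}+\Tquality{T}{W}=|T_1|+|T_2|$ is a slight clarification over the paper's version, but the substance is identical.
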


\subsection{Is FPTAS achievable?}\label{sec:FPTAS}
Har-Peled and Wang \cite{Yusu06} proposed an approximation algorithm for the \MaxIn\ problem whose running time is $\mathcal{O} \left(\frac{n^4}{\delta^2}\right)$, where $n$ is the size of the input polygonal curves.
They claimed that their approach is an $\left( 1 - \delta \right)$-approximation, that is, the quality of their solution is greater than $\left( 1 - \delta \right)$ times that of the optimal (\cite{Yusu06}, Theorem 4.2).
Their proof is based on a claim that the length of the boundary of the free-space of an arbitrary parameter cell is bounded by $\mathcal{O} \left( \Tquality{T}{W} \right)$.
However, we show via a counterexample (see Appendix) that this claim is not true (see also \cite{Yusu13}).

It remains to be discussed if it is possible to design a FPTAS (Fully Polynomial-Time Approximation Scheme) for the $\MaxIn$ problem. 
As already discussed, asking for a polynomial time $\left( 1 - \delta \right)$-approximation for the $\MaxIn$ problem is equivalent to searching for a polynomial time $\left( 1 - \delta \right)$-approximation algorithm for the $\LongMP$ problem. 
The common methodology to approximate longest paths in the presence of weighted regions is to approximate those areas by ``accurate enough'' structures that are more easily manageable, than the original ones.
One frequently used approach is to overlay the relevant weighted areas, in our case the connected components of $W$,  by a mesh, having a (small enough) grid size, related to $\delta$. 
For example, this could be achieved by carefully connecting additionally positioned Steiner points on $W$ to a dense enough neighborhood graph.
To guarantee a $\left( 1 - \delta \right)$-approximation for the $\MaxIn$ problem we have to upper bound the resulting error, i.e. the grid size, by $\delta$ times the optimal solution, that is by $\delta \cdot \Tquality{T}{W}$. As illustrated in Figures \ref{fig:counterExample}(a) and (b) shown in the Appendix, we can imagine configurations, in which the size of the areas to be covered could be arbitrarily larger than the quality provided by an optimal solution.
This implies directly an arbitrary large complexity of the applied grid-like structure.
It is easy to see, that we can make the analogous observations for the $\MinEx$ problem, that is, the quality of the optimal solution $\Tquality{T}{B}$ could be arbitrary small (see Figure \ref{fig:outline} for an illustration).
One typical way which leads to efficient approximation algorithms is to argue approximating objects by applying some ``fatness" properties \cite{Aleks10,Berg10,Stappen93}. Unfortunately, there is no indication for such properties being fulfilled in general in the context of our problem settings.

\begin{figure}[ht]
    		\begin{center}
     			 \begin{tabular}{ccc}
       				\includegraphics[height=2.9cm]{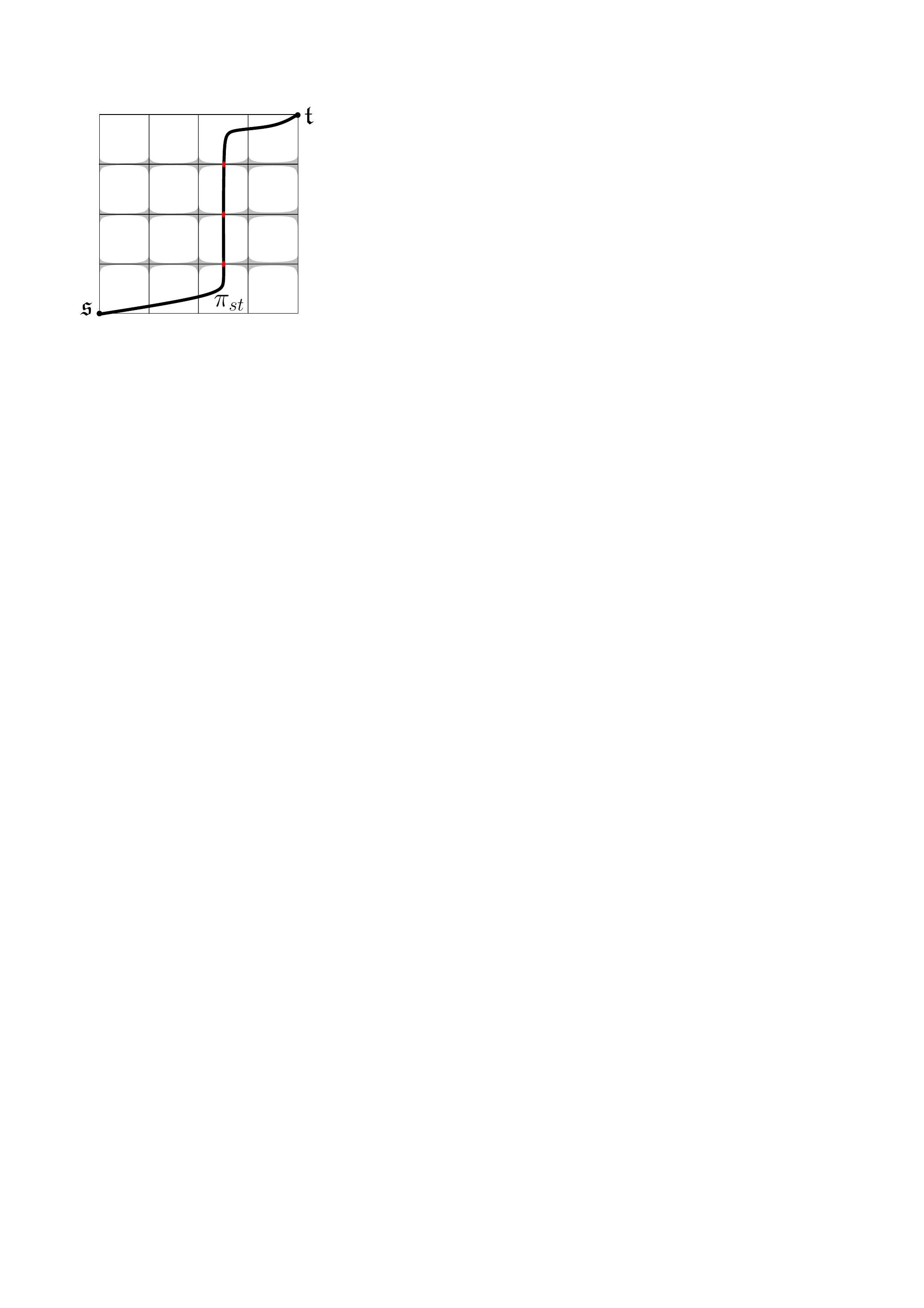} & \includegraphics[height=2.9cm]{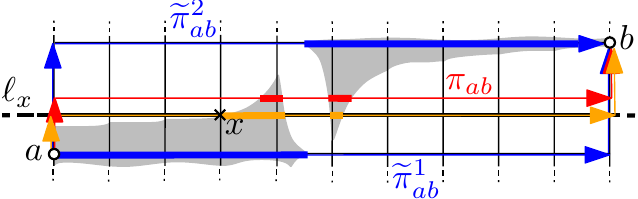}\\
       				{\small a) An arbitrarily small $\Tquality{T}{B}$}.  & {\small b)  A path decision could be locally good, but maybe globally bad.} \\
      			\end{tabular}
    		\end{center}
    		\vspace*{-12pt}
    		\caption{a) A free-space diagram in which the weighted parts (red subcurves) of a weighted shortest $xy$-monotone path $\pi_{\sPoint \tPoint}$ are arbitrary small, compared to the length of $T_1$ and $T_2$. b) A weighted shortest subpath $\pi_{ab}$ and the two possibilities $\widetilde{\pi}_{ab}^1$ and $\widetilde{\pi}_{ab}^2$, to stay on $G$ close to $\pi_{ab}$. Both solutions have a weight (sum of the lengths of the fat blue subpaths) much bigger than that of an optimal (sum of the lengths of the fat red subpaths). An additional intersection line, determined by an intersection point $x$ of the boundary of an ellipse and a grid line, enables a path $\widetilde{\pi}_{\sPoint \tPoint}$, whose weighted subpaths are much smaller (sum of the lengths of the fat orange subpaths).}
   		\label{fig:outline}
\end{figure}


Although it seems difficult to achieve a FPTAS, it is possible to design a polynomial time $(1-\delta)$-approximation algorithm that depends on factors other than the size of the input and the approximation factor $\delta$.
Based on Observation \ref{obs:transformation}, any solution for \LongMP\ is transformable to a solution for \MaxIn\ problem. Therefore, an approximation algorithm for \LongMP\ is an approximation algorithm for \MaxIn\ as well.
These problems are solvable exactly in the $L_1$-metric. In the following algorithm, we make use of this to design a  $\delta$-approximation algorithm in $L_2$-metric.

Let $F_1$ (respectively $F_2$) be the the Free Space diagram for the $L_1$ (respectively $L_2$)-metric with Fr\'{e}chet distance $\eps$.\\
{\bf Step 1:} Compute an exact solution for \LongMP\ problem in $L_1$-metric using \cite{Yusu09}. Let the length of this optimal solution be $\gamma$.\\
{\bf Step 2:} For each parameter cell of $F_2$, $C^{i,j}$, with free-space $\mathcal{E}$, position equally spaced Steiner points on $\partial\mathcal{E}$ with a distance of $\frac{\gamma\delta}{4n}$, where $n=n_1+n_2$.\\
{\bf Step 3:} Construct a directed acyclic graph (DAG) on these Steiner points as in the algorithm in Corollary \ref{cor:finalResult}, and find the longest path $\widetilde{\pi}_{\sPoint\tPoint}$ from $\sPoint$ to $\tPoint$ in this DAG.\\
Let $\pi_{\sPoint\tPoint}$ be an optimum solution for \LongMP\ in $L_2$-metric. We claim the following lemma.
\begin{lemma}
\label{lem:delta}
$\gamma \leq ||\pi_{\sPoint\tPoint}||$.
\end{lemma}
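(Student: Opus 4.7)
The plan is to exhibit the optimal $L_1$ path as a feasible candidate in the $L_2$ \LongMP\ instance with free-length at least $\gamma$, after which the optimality of $\pi_{\sPoint\tPoint}$ gives the desired bound. The key observation is a simple containment between the free spaces of the two metrics.

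First, I would observe that for every vector $v\in\mathbb{R}^2$ we have $\|v\|_2\le\|v\|_1$. Consequently, whenever a parameter pair $(p_1,p_2)$ is free in $L_1$, that is $\|T_1(p_1)-T_2(p_2)\|_1\le\varepsilon$, it is also free in $L_2$. Since the deformed diagrams $F_1$ and $F_2$ live on exactly the same rectangular parameter domain (the deformation depends only on the edge lengths of $T_1$ and $T_2$, not on the metric used to test freeness), this inclusion can be stated as $F_1^{\text{free}}\subseteq F_2^{\text{free}}$ as subsets of the same rectangle.

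Next, let $\pi^{L_1}$ be the exact optimum for \LongMP\ in $L_1$, whose $L_1$-length measured inside $F_1^{\text{free}}$ equals $\gamma$. The path $\pi^{L_1}$ is $xy$-monotone and connects $\sPoint$ to $\tPoint$, so I would re-interpret it verbatim as an $xy$-monotone path in $F_2$. By the inclusion above, every portion of $\pi^{L_1}$ that lay inside $F_1^{\text{free}}$ still lies inside $F_2^{\text{free}}$. Hence the $L_1$-length of $\pi^{L_1}$ inside $F_2^{\text{free}}$ is at least $\gamma$, so $\pi^{L_1}$ is a feasible $\LongMP$ path in $F_2$ of value at least $\gamma$. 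The optimality of $\pi_{\sPoint\tPoint}$ then yields $\gamma\le\|\pi_{\sPoint\tPoint}\|$.

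There is no significant obstacle; the only point requiring care is that the two deformed diagrams share the same parameter rectangle (so an $xy$-monotone path in $F_1$ makes sense as an $xy$-monotone path in $F_2$) and that the length used in the definition of $\LongMP$ is the $L_1$ length of the path in the deformed diagram, which depends only on the path, not on the metric that defined freeness. Both are immediate from the definitions in Section~\ref{subsec:ProblemDefinition} together with Observation~\ref{obs:transformation}.
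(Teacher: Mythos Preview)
Your argument is correct and is essentially the same as the paper's: both use the inequality $\|v\|_2\le\|v\|_1$ to deduce that the $L_1$ free space is contained in the $L_2$ free space, and then conclude that the optimal $L_1$ path is a feasible $L_2$ candidate of value at least $\gamma$. Your version is a bit more explicit about the optimal $L_1$ path serving as a witness and about the two deformed diagrams sharing the same parameter rectangle, but the underlying idea is identical.
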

\begin{proof}
Suppose $W_{1}$ (respectively $W_{2}$) is the free-space of $F_1$ (respectively $F_2$).
For any pair of matched points $(a,b)\in W_{1}$ we know that $|ab|_1 \leq \eps$, where $|.|_1$ is $L_1$ distance.
Also, it is known that $|ab| \leq |ab|_1$, where $|.|$ is the Euclidean distance.
Therefore, $|ab| \leq \eps$ implies $(a,b) \in W_{2}$. This implies that $W_{1} \subseteq W_{2}$ . Therefore, the longest path in $W_{1}$ is shorter than or equal to the longest path in $ W_{2}$.
\end{proof}

\begin{theorem}
Given two polygonal curves $T_1$ and $T_2$ in the plane, an arbitrary fixed $\varepsilon > 0$ and an approximation parameter $\delta >0$, we can compute in $\mathcal{O} \left( \frac{n^3}{\gamma\delta} \log \left( \frac{n}{\gamma\delta} \right) \right)$ time a pair of parameterizations $\left( \tilde\alpha_1 , \tilde\alpha_2 \right)$ and its quality $\quality{\tilde{\alpha}}{W}$ for $T_1$ and $T_2$, such that
			$(1-\delta) \cdot \Tquality{T}{W} \leq \quality{\tilde{\alpha}}{W} \leq \Tquality{T}{W}$,
where $\gamma$ is the optimal solution in $L_1$-metric.
Furthermore, we can construct two polygonal curves, $T_1'$ and $T_2'$, realizing $\quality{\tilde{\alpha}}{W}$, such that $\delta_{F} \left( T_1',T_2' \right) \leq \varepsilon$, if the distances between starting and ending points of $T_1$ and $T_2$ are not greater than $\varepsilon$.
\end{theorem}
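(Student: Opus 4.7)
The plan is to verify the two-sided approximation inequality and then read off the running time from the construction inherited from Section~\ref{sec:improvement}. The upper inequality $\quality{\tilde{\alpha}}{W} \leq \Tquality{T}{W}$ is immediate: the longest $xy$-monotone DAG path $\widetilde{\pi}_{\sPoint\tPoint}$ produced in Step~3 is an $xy$-monotone path inside $F_2$, so by Observation~\ref{obs:transformation} it corresponds to a legitimate pair $(\tilde{\alpha}_1,\tilde{\alpha}_2)$ whose free-space quality cannot exceed the supremum $\Tquality{T}{W}$.

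For the lower inequality I would fix a weighted longest $xy$-monotone path $\pi_{\sPoint\tPoint}$ in $F_2$ realizing $\Tquality{T}{W}$ and construct a competing path $\widetilde{\pi}^*_{\sPoint\tPoint}$ in the DAG, losing only $O(\gamma\delta/n)$ units of free-space $L_1$-length per parameter cell. In each cell $C^{i,j}$ whose ellipse $\mathcal{E}$ is entered by $\pi_{\sPoint\tPoint}$ at a point $a$ and exited at $b$ (convexity of $\mathcal{E}$ together with $xy$-monotonicity guarantees at most one such pair), the Steiner spacing $\gamma\delta/(4n)$ along $\partial\mathcal{E}$ yields Steiner points $a',b' \in V_{\partial\mathcal{E}}$ within arc-length $\gamma\delta/(4n)$ of $a$ and $b$ satisfying $a'\uparrow b'$. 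Lemma~\ref{lem:convexConnectivity} then supplies an $xy$-monotone path in $G'(V_{\partial\mathcal{E}})$ from $a'$ to $b'$ of $L_1$-length $|a'b'|_1$, and the portions of $\pi_{\sPoint\tPoint}$ outside $\mathcal{E}$ are routed along the sides of $C^{i,j}$ via the orthogonal projections built into the construction used for Theorem~\ref{thm:finalResult}. Cells entirely inside or entirely outside $\mathcal{E}$ are handled identically, with no endpoint correction required.

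Summing over the at most $2n$ cells crossed by the $xy$-monotone path, the cumulative loss in free-space $L_1$-length is $O(\gamma\delta)$. Here is where Lemma~\ref{lem:delta} is crucial: $\gamma \leq ||\pi_{\sPoint\tPoint}|| = \Tquality{T}{W}$, so the loss is at most $\delta \cdot \Tquality{T}{W}$ after folding the constant into $\delta$ by tightening the spacing. Since $\widetilde{\pi}_{\sPoint\tPoint}$ is a DAG longest path, $\quality{\tilde{\alpha}}{W} \geq ||\widetilde{\pi}^*_{\sPoint\tPoint}|| \geq (1-\delta)\Tquality{T}{W}$. The construction of $T_1',T_2'$ realizing $\quality{\tilde{\alpha}}{W}$ with $\delta_F(T_1',T_2') \leq \varepsilon$ is then verbatim from Section~\ref{sec:firstAlgorithm}: replace each maximal forbidden-space subcurve of the embedded path by the straight segments joining its endpoints on $T_1$ and $T_2$.

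For the running time, each boundary $\partial\mathcal{E}$ carries $O(|\partial\mathcal{E}|\cdot n/(\gamma\delta))$ Steiner points, so by Lemma~\ref{lem:convexConnectivity} the per-cell subgraph has $\mathcal{O}((n/(\gamma\delta))\log(n/(\gamma\delta)))$ vertices and edges. Summing over all $\mathcal{O}(n^2)$ cells and computing the longest path in the resulting DAG in time linear in its size yields the claimed $\mathcal{O}(n^3/(\gamma\delta)\log(n/(\gamma\delta)))$ bound, exactly as in Corollary~\ref{cor:finalResult}; Step~1 is polynomial by the exact $L_1$ algorithm of~\cite{Yusu09}. I expect the main obstacle to be the per-cell accounting: one must argue that the displacements $a\to a'$ and $b\to b'$ can always be chosen to preserve $xy$-monotonicity (using convexity of $\mathcal{E}$ and a careful rounding rule) and that the replaced free-space $L_1$-length is at most $|a'b'|_1 + O(\gamma\delta/n)$ rather than suffering any multiplicative blow-up, which ultimately rests on Observation~\ref{obs:L1path} applied within $\mathcal{E}$.
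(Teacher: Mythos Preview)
Your proposal is correct and follows essentially the same approach as the paper: bound the per-cell loss in free-space $L_1$-length by a constant times the Steiner spacing $\gamma\delta/(4n)$, sum over the $\mathcal{O}(n)$ cells crossed by an $xy$-monotone path, and invoke Lemma~\ref{lem:delta} to convert the additive $\gamma\delta$ error into a multiplicative $(1-\delta)$ guarantee. Your treatment is in fact more explicit than the paper's (which argues the per-cell bound with a picture and immediately multiplies out), and your caveat about choosing $a',b'$ to preserve $a'\uparrow b'$ is exactly the monotonicity detail the paper leaves implicit.
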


\begin{proof}
Since the solution generated by our algorithm $\widetilde{\pi}_{\sPoint\tPoint}$ is $xy$-monotone it intersects no more than $n$ cells. The maximum error in each cell is at most $4*\frac{\gamma\delta}{4n}$. 
The reason is, as illustrated in Figure \ref{fig:deltaApp}, the difference between the length of the optimum path $\pi_{\sPoint\tPoint}$ in the free-space of a cell and the path between its adjacent Steiner points in the graph (between two of $s_1$, $s_2$, $s'_1$ and $s'_2$ in the figure) is at most 4 times the distance between two consecutive Steiner points. 
Recall that we measure the length of the paths in free-space diagram with $L_1$-metric (dotted lines in Figure \ref{fig:deltaApp}).
Hence $||\widetilde{\pi}_{\sPoint\tPoint}|| \ge ||\pi_{\sPoint\tPoint}||- n*4*\frac{\gamma\delta}{4n}$. By Lemma \ref{lem:delta} it follows that $||\widetilde{\pi}_{\sPoint\tPoint}|| \ge (1-\delta)||\pi_{\sPoint\tPoint}||$.
The running time of the algorithm involves computing  the exact solution for \LongMP\ problem using the algorithm of \cite{Yusu09}, placement of Steiner points,  construction of the DAG and finding a path in this DAG.  
Putting everything together, the time complexity is $\mathcal{O} \left( \frac{n^3}{\gamma\delta} \log \left( \frac{n}{\gamma\delta} \right) \right)$.
\end{proof}
\begin{figure}[ht]
	\begin{center}
       			\includegraphics{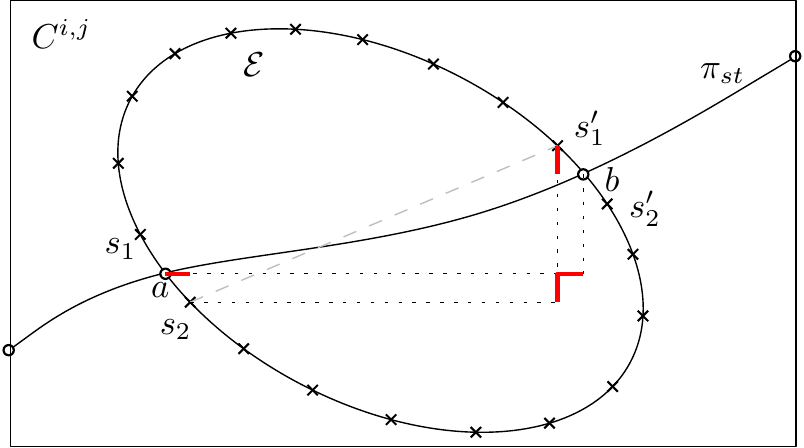} 
	\end{center}
    		\caption{The length of the path inside the free-space (which is convex) is shown by dotted lines. The difference between the length of the optimum path $\pi_{\sPoint\tPoint}$ in free-space of a cell and the path between its adjacent Steiner points $s_2$ and $s'_1$ in the graph is at most 4 times the length of red line segments.}
   		\label{fig:deltaApp}
\end{figure}
\section{Conclusion}\label{sec:conclusion}
Our approach to measure the similarity between two polygonal curves in the presence of outliers is to minimize the portions of the curves where the matching is not possible (\MinEx\ problem) or maximize the matched subcurves (\MaxIn\ problem).
We reduced our problems to that of finding an $xy$-monotone weighted path in the deformed free-space diagram in $L_1$-metric. For the \MinEx\ problem (\MaxIn\ problem) the free-space is weighted by zero (one) and the forbidden-space is weighted by one (zero). 
After proving that these problems are not solvable in the \acmq, we designed approximation algorithms for the \MinEx\ and \MaxIn\ problems.
We proposed an algorithm, running in $\mathcal{O} \left( \frac{n^3}{\delta} \log \left( \frac{n}{\delta} \right)\right)$ time with additive approximation error.
It is still open if there exist a FPTAS for this problem. However, as we have shown, it is possible to design a $(1-\delta)$-approximation algorithm that its complexity depends on the input size $n$, given approximation factor $\delta$ and the length of the optimum solution in $L_1$-metric $\gamma$.

\section{Acknowledgment}
The authors thank Yusu Wang for communication \cite{Yusu13} regarding clarifications on Theorem 4.2 \cite{Yusu06}.




\bibliography{references}{}

\break
\section{Appendix}

Consider the following example, where each trajectory $T_1$ and $T_2$ consists of a single line segment (Figure \ref{fig:counterExample}a). They have the same length and are parallel to each other at distance $\eps$.
Their starting and ending points are in opposite directions. Therefore, the free-space for $T_1$ and $T_2$ at Fr\'{e}chet distance $\eps$ is equal to the diagonal, connecting the upper left corner of $F$ to the bottom right corner. 
Suppose that $\omega$ is the length of the boundary of free-space of the parameter cell.
By moving $T_1$ toward $T_2$, the free space becomes a 2-dimensional solid. This solid can be chosen arbitrarily thin, such that the ratio $\frac{\omega}{\Tquality{T}{W}}$ is arbitrarily big. This is a contradiction to $\omega = \mathcal{O} \left( \Tquality{T}{W} \right)$, which was an assumption in Theorem 4.2 \cite{Yusu06}.\\


\begin{figure}[ht]
    		\begin{center}
     			 \begin{tabular}{ccc}
       				\includegraphics[height=2.9cm]{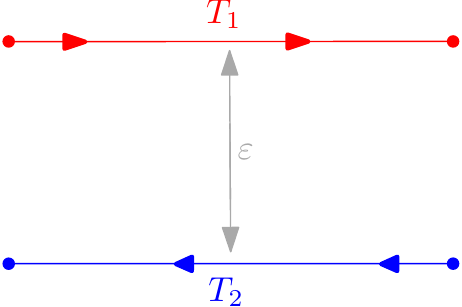} & \includegraphics[height=2.9cm]{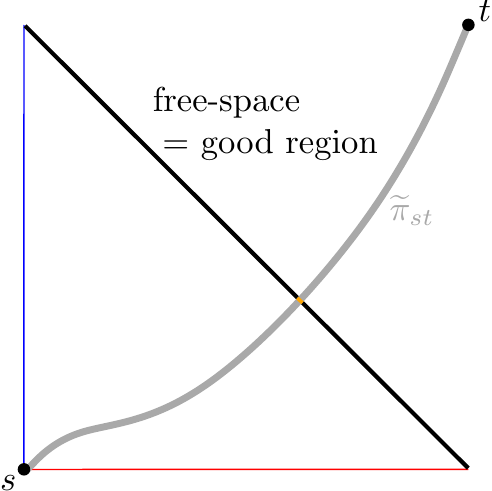}\\
       				{\small a) Two segments lie parallel to each other }  & {\small b) Free-space of $T_1$ and $T_2$ is equal to} \\
				{\small \;\;\;\;\;\;  with the distance equal to $\eps$. They have } & {\small \;\;\;\; the diagonal, connecting the upper}\\
				{\small \;\;\;\;\;\;\;\;\;opposite direction for movement. } & {\small \;\;\;\;\;\;\;\;\; left corner with the bottom right one.}\\

      			\end{tabular}
    		\end{center}
    		\vspace*{-12pt}
    		\caption{Counterexample for $\omega = \mathcal{O} \left( \Tquality{T}{W} \right)$}
   		\label{fig:counterExample}
\end{figure}

\end{document}